\newtheorem{theorem}{Theorem}
\newtheorem{remark}{Remark}
\def\be{\begin{equation}}
\def\ee{\end{equation}}
\def\bc{\begin{center}}
\def\ec{\end{center}}
\begin{document}

%----------------------------------------------------------------------------------------
%
%		TITLE & AUTHORS
%
%----------------------------------------------------------------------------------------

\title{Mean field bipartite spin models treated with mechanical techniques}

%\titlerunning{Short form of title}        % if too long for running head

\author{
Adriano Barra\footnote{Sapienza Universit\`a di Roma, Dipartimento di Fisica and GNFM Gruppo di Roma,  Italy},
Andrea Galluzzi \footnote{Sapienza Universit\`a di Roma, Dipartimento di Matematica and GNFM Gruppo di Roma, Italy},
Francesco Guerra\footnote{Sapienza Universit\`a di Roma, Dipartimento di Fisica and INFN Sezione di Roma,  Italy},
Andrea Pizzoferrato\footnote{The University of Warwick, Mathematics Institute, Coventry, United Kingdom},
Daniele Tantari \footnote{Sapienza Universit\`a di Roma, Dipartimento di Matematica and GNFM Gruppo di Roma, Italy}
}

\maketitle
%----------------------------------------------------------------------------------------
%
%		ABSTRACT
%
%----------------------------------------------------------------------------------------
\begin{abstract}
Inspired by a continuously increasing interest in modeling and framing complex systems in a thermodynamic rationale, in this paper we continue our investigation in adapting well known techniques (originally stemmed in fields of physics and mathematics far from the present) for solving for the free energy of mean field spin models in a statistical mechanics scenario.

Focusing on the test cases of bipartite spin systems embedded with all the possible interactions (self and reciprocal), we show that both the fully interacting bipartite ferromagnet as well as the spin glass counterpart, at least at the replica symmetric level, can be solved via the fundamental theorem of calculus, trough an analogy with the Hamilton-Jacobi theory and lastly with a mapping to a Fourier diffusion problem. All these technologies are shown symmetrically for ferromagnets and spin-glasses in full details and contribute as powerful tools in the investigation of complex systems.
\end{abstract}

%----------------------------------------------------------------------------------------
%
%		INTRODUCTION
%
%----------------------------------------------------------------------------------------
\section{Introduction}
In the last years, equilibrium statistical mechanics has been successfully extended beyond the conventional area of the physics of matter, for instance in quantitative sociology (see e.g. \cite{BC1,galam,HowSMSocial,ExpGSN}) or theoretical biology (see e.g. \cite{GiorgioImmune,amit,bialek,SMautopoietic}). However, these (as well as many others, see e.g. \cite{bouchaud,ghirlanda}) new research fields continuously require more refined mathematical methods and  models in order to give an always more relevant quantitative description and understanding of the phenomena they aim to tackle.
\newline
Among the several novelty these fields of research required, there has been a microscopic description of dynamical systems where two species compete or collaborate, for instance a' la Lotka-Volterra: restricting to equilibrium properties, this need led to a renewal formulation of bipartite spin systems \cite{pierluigi1}, beyond their original introduction within the more standard world of physics of matter \cite{cohen}, which allows to study the emergent collective properties of two interactive large groups of variables. For instance, in quantitative sociology the latter may capture essential features of migrant's integration inside a host community \cite{cecilia} or  the dialogue between two different ensembles of closely interacting cells, as for instance B and T cells within the immune system \cite{galluzzi}.
\newline
In this paper we do not deal with comparing modeling to real data, instead we continue our investigation consisting in obtaining new statistical mechanics techniques all based on adapting existing technologies originally developed to work in field far away from the actual focus, such to make them able to solve for the free energy of suitably defined mean field spin Hamiltonians. We will mainly focus on sum rules originated from a mapping of the statistical mechanics problem with the fundamental theorem of calculus as firstly shown in \cite{hightempGT} and then extended in \cite{interpolBarra}, with the Hamilton-Jacobi framework, firstly developed in \cite{Gsumrules} and then extended in \cite{interpolBarra}\cite{aldo}, and with the Fourier conduction investigated in \cite{MechApproach}\cite{Fourier}, which is a side effect of the mechanical analogy previously introduced.
\newline
Dealing with the subjects and not only with the methodologies, the present work constitutes an extension mainly of \cite{bipCWSK}, where bipartite mean field model have been carefully inspected from a (standard) statistical mechanics perspective and \cite{Fourier} where the techniques we are going to use have been tested on single-party models: the two routes of investigation are here merged together in a unified and stronger theory.
\newline
The paper is divided into two symmetric parts: In the first one, a bipartite ferromagnetic model, which not only considers the interaction among spins of different parties but also between the ones of the same group, is studied through three different interpolation approaches, respectively the fundamental theorem of calculus, the Hamilton-Jacobi scheme and the Fourier transform. In the second one, the same procedures are applied to the disordered (glassy) counterpart of the first model. Unfortunately, as a Parisi-like theory \cite{MPV} for these models is still under construction, and also because it is usually sacrificed in many practical applications involving models beyond the Sherrington-Kirkpatrick paradigm, the thermodynamics of these systems is studied at the replica symmetric level.
%----------------------------------------------------------------------------------------
%
%		FERROMAGNETIC CASE
%
%----------------------------------------------------------------------------------------
\section{Ferromagnetic case}
%----------------------------------------------------------------------------------------
%		THE MODEL - F
%----------------------------------------------------------------------------------------
\subsection{The Model}
The spin system we study is an extension of the one analyzed in \cite{bipCWSK}. There, two dichotomic parties of variables, $\{\sigma_i\}_{i=1,...,N_{\sigma}}$ and $\{\tau_i\}_{i=1,...,N_{\tau}}$, which were coupled through a ferromagnetic interaction, were considered: here we take into account also the ferromagnetic interaction between spins of the same group. All this results in a Hamiltonian made up of the following contributions
\begin{equation}\label{eq:Hfer}
H_{N}\left(\boldsymbol{\sigma,\tau},\boldsymbol{\beta}\right)=-\frac{1}{N}\beta_{\sigma\tau}\sum_{i=1}^{N_{\sigma}}\sum_{j=1}^{N_{\tau}}\sigma_{i}\tau_{j}-\frac{1}{2N}\beta_{\sigma}\sum_{i,j}^{N_{\sigma}}\sigma_{i}\sigma_{j}-\frac{1}{2N}\beta_{\tau}\sum_{i,j}^{N_{\tau}}\tau_{i}\tau_{j},
\end{equation}
where $\sigma_i$, $\tau_i\in\{-1;1\}$  are the two families of dichotomic spin variables;  $\beta_{\sigma}$, $\beta_{\tau}$  and $\beta_{\sigma\tau}$ are the strength of the interactions weighting the intensity of the three different contributions to the Hamiltonian; $N_{\sigma}$ and $N_{\tau}$ are the number of spins for each party with $N=N_{\sigma}+N_{\tau}$. Note that equation ($\ref{eq:Hfer}$) defines a mean-field model, where each couple of spins interact in a ferromagnetic way (all the couplings are positive), and the normalization $1/N$ ensure the linear extensivity of the thermodynamical observables (e.g. energy, entropy, etc.) with  respect to the size of the system.
Introducing  $\alpha= N_{\sigma}/N$ and thus $\left(1-\alpha\right)= N_{\tau}/N$ and denoting with $\mathcal{O}(\boldsymbol{\sigma,\tau})$ a generic observable of the system, the definitions of the statistical mechanic and thermodynamic quantities are given straightforwardly:
\begin{equation*}
\begin{array}{cc}
\text{Partition function} & Z_{N}\left(\boldsymbol{\beta},\alpha\right)\coloneqq\sum_{\boldsymbol{\sigma,\tau}}e^{-\beta H_{N}(\boldsymbol{\sigma,\tau,\beta},\alpha)},\\
&\\
\text{Boltzmann average} & \left<\mathcal{O}\left(\boldsymbol{\sigma,\tau}\right)\right>\coloneqq Z_{N}^{-1}\left(\boldsymbol{\beta},\alpha\right)\sum_{\boldsymbol{\sigma,\tau}}\mathcal{O}\left(\boldsymbol{\sigma,\tau}\right)e^{-\beta H_{N}\left(\boldsymbol{\sigma,\tau,\beta},\alpha\right)} ,\\
&\\
\text{Magnetization of the $\sigma$ party} & m_{\sigma}(\boldsymbol{\sigma})\coloneqq\frac{1}{N_{\sigma}}\sum_{i=1}^{N_{\sigma}}\sigma_{i}, \\
&\\
\text{Magnetization of the $\tau$ party} & m_{\tau}(\boldsymbol{\tau})\coloneqq\frac{1}{N_{\tau}}\sum_{i=1}^{N_{\tau}}\tau_{i}, \\
&\\
\text{Pressure (free energy)} & A(\boldsymbol{\beta},\alpha)=\lim_{N\to\infty}A_{N}\left(\boldsymbol{\beta},\alpha\right)\coloneqq\frac{1}{N}\ln{Z_{N}\left(\boldsymbol{\beta},\alpha\right)}=-\beta f_N(\beta),\\
&\\
\end{array}
\end{equation*}
where $f_N(\beta)$ is the free energy.
\newline
In the following, for the sake of simplicity and without loss of generality, we  will put $\beta =1$: we can restore the dependence by $\beta$ simply rescaling the couplings $\beta_{x}\to\beta\beta_x$, with $x=\sigma,\tau,\sigma\tau$. In the present paper we want to describe three different techniques that can be used to solve the model and in particular to compute the thermodynamic limit of the intensive pressure as  to characterize the thermodynamic states, i.e. the averages (and in general the moments) of the order parameters.
Each one of the three routes approaches the problem from a different perspective but all of these can be thought as proofs of the following
\begin{theorem}\label{thFer}
The thermodynamic limit of the intensive pressure of the full interacting  ferromagnetic bipartite model defined in $(\ref{eq:Hfer})$ reads as
\begin{eqnarray}\label{eq:FerPress}
A(\boldsymbol{\beta},\alpha)&=&\ln2+\alpha\ln\cosh\left(\beta_{\sigma}\alpha\bar{m}_{\sigma}+\beta_{\sigma\tau}\left(1-\alpha\right)\bar{m}_{\tau}\right) +\left(1-\alpha\right)\ln\cosh\left(\beta_{\sigma\tau}\alpha\bar{m}_{\sigma}+\beta_{\tau}\left(1-\alpha\right)\bar{m}_{\tau}\right)+\\
\nonumber&-&\left[\beta_{\sigma\tau}\alpha\left(1-\alpha\right)\bar{m}_{\sigma}\bar{m}_{\tau}+\frac 1 2 \beta_{\sigma}\alpha^{2}\bar{m}_{\sigma}^{2}+\frac 1 2 \beta_{\tau}\left(1-\alpha\right)^{2}\bar{m}_{\tau}^{2}\right],
\end{eqnarray}
where the two quantities $\bar{m}_{\sigma}$ and $\bar{m}_{\tau}$ are the solution of the following system of self-consistent equations
\begin{equation}\label{eq:selfcontT0}
\begin{cases}
\bar{m}_{\sigma}=\tanh\left(\beta_{\sigma}\alpha\bar{m}_{\sigma}+\beta_{\sigma\tau}\left(1-\alpha\right)\bar{m}_{\tau}\right),\\
\bar{m}_{\tau}=\tanh\left(\beta_{\sigma\tau}\alpha\bar{m}_{\sigma}+\beta_{\tau}\left(1-\alpha\right)\bar{m}_{\tau}\right).
\end{cases}
\end{equation}
\end{theorem}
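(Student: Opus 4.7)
The plan is to apply a Guerra--Toninelli mechanical interpolation combined with the fundamental theorem of calculus. I introduce two variational parameters $\bar m_\sigma,\bar m_\tau$ and the one-body effective fields
\begin{equation*}
h_\sigma=\beta_\sigma\alpha\bar m_\sigma+\beta_{\sigma\tau}(1-\alpha)\bar m_\tau,\qquad h_\tau=\beta_{\sigma\tau}\alpha\bar m_\sigma+\beta_\tau(1-\alpha)\bar m_\tau,
\end{equation*}
together with the interpolating pressure
\begin{equation*}
A_N(t)=\frac{1}{N}\ln\sum_{\boldsymbol{\sigma,\tau}}\exp\!\left[-tH_N(\boldsymbol{\sigma,\tau})+(1-t)\!\left(h_\sigma\sum_{i=1}^{N_\sigma}\sigma_i+h_\tau\sum_{j=1}^{N_\tau}\tau_j\right)\right],\qquad t\in[0,1].
\end{equation*}
By construction $A_N(1)$ equals the quantity under study, while at $t=0$ the system factorises on single sites so that an explicit count gives $A_N(0)=\alpha\ln 2\cosh h_\sigma+(1-\alpha)\ln 2\cosh h_\tau$.

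Next, I rewrite the Hamiltonian as $H_N/N=-\tfrac{1}{2}Q(m_\sigma,m_\tau)$ with the quadratic form $Q(x,y)=\beta_\sigma\alpha^2 x^2+2\beta_{\sigma\tau}\alpha(1-\alpha)xy+\beta_\tau(1-\alpha)^2 y^2$ and expand around the trial point via $m_x=\bar m_x+\delta m_x$, using $Q(\bar m+\delta m)=Q(\bar m)+\nabla Q(\bar m)\cdot\delta m+Q(\delta m)$. A direct check shows that the fields $h_\sigma,h_\tau$ are precisely the components that make the linear-in-fluctuation contributions cancel against the one-body term weighted by $1-t$, and using $\nabla(\tfrac12 Q)(\bar m)\cdot\bar m=Q(\bar m)$ the differentiation of the interpolating pressure collapses to
\begin{equation*}
\frac{d}{dt}A_N(t)=-\tfrac{1}{2}Q(\bar m_\sigma,\bar m_\tau)+\tfrac{1}{2}\langle Q(\delta m_\sigma,\delta m_\tau)\rangle_t,
\end{equation*}
where $\langle\cdot\rangle_t$ is the Boltzmann average associated with $A_N(t)$.

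Integrating between $t=0$ and $t=1$ by the fundamental theorem of calculus gives
\begin{equation*}
A_N=A_N(0)-\tfrac{1}{2}Q(\bar m_\sigma,\bar m_\tau)+\tfrac{1}{2}\!\int_0^1\!\langle Q(\delta m_\sigma,\delta m_\tau)\rangle_t\,dt,
\end{equation*}
and the first two contributions coincide, once $Q(\bar m)$ is expanded, with the right-hand side of~(\ref{eq:FerPress}). Taking $(\bar m_\sigma,\bar m_\tau)$ as a stationary point of the trial pressure $\Psi(\bar m_\sigma,\bar m_\tau):=A_N(0)-\tfrac{1}{2}Q(\bar m_\sigma,\bar m_\tau)$, the two conditions $\partial\Psi/\partial\bar m_x=0$ reduce through the chain rule to the coupled self-consistency system~(\ref{eq:selfcontT0}).

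The main obstacle is the control of the remainder $\int_0^1\!\langle Q(\delta m)\rangle_t\,dt$ as $N\to\infty$. For a univariate Curie--Weiss model $Q$ is manifestly nonnegative and the remainder has a definite sign, yielding a variational bound at once; here instead the bipartite quadratic form need not be positive semidefinite when $\beta_\sigma\beta_\tau<\beta_{\sigma\tau}^2$, so the integrand alone does not supply a one-sided estimate. I would close the proof by combining two ingredients: a Laplace/saddle-point estimate obtained after a Hubbard--Stratonovich linearisation in two auxiliary Gaussian fields (the same manoeuvre underlying the Fourier route the paper develops later), which produces a variational upper bound on $A$; and a matching lower bound given by the value of $\Psi$ at the relevant critical point. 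The two bounds coincide when $(\bar m_\sigma,\bar m_\tau)$ is chosen as the global maximiser of $\Psi$, forcing the fluctuation integral to vanish in the thermodynamic limit and delivering~(\ref{eq:FerPress}).
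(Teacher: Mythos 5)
Your interpolation is the same one the paper uses in its first (sum-rule) proof: the trial fields you call $h_\sigma,h_\tau$ are exactly the paper's $C_\sigma,C_\tau$, the decomposition of $\tfrac{d}{dt}A_N$ into $-\tfrac12 Q(\bar m)$ plus the fluctuation term $\tfrac12\langle Q(\delta m)\rangle_t$ reproduces the paper's identification of coefficients $a,b,c$, and your trial pressure $\Psi$ is the paper's $A_N(0)+\bar A$. Your algebra (the gradient/Euler-identity step $\nabla(\tfrac12 Q)(\bar m)\cdot\bar m=Q(\bar m)$ collapsing the linear fluctuations) is a cleaner bookkeeping of what the paper does by matching coefficients, and it is correct. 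So the core of your argument is the paper's.

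Where you genuinely diverge is at the last step, and there you have both a good observation and a gap. The good observation: you are right that when $\beta_\sigma\beta_\tau<\beta_{\sigma\tau}^2$ the quadratic form $Q$ is indefinite, so the fluctuation integral has no built-in sign and the Curie--Weiss one-sided estimate does not carry over; the paper does not address this and simply invokes self-averaging of the magnetizations to assert $R_N(t)\to 0$. The gap: your proposed closure -- Hubbard--Stratonovich linearization followed by Laplace, matched against a lower bound given by $\Psi$ at the critical point -- is only sketched, and as sketched it runs into the very same difficulty you flagged. If $Q$ is indefinite you cannot linearize $\exp(\tfrac{N}{2}Q(m))$ with a pair of real auxiliary Gaussians; you would first have to diagonalize $Q$ and treat the negative-eigenvalue direction with an imaginary (oscillatory) integral, at which point the Laplace bound is no longer a one-line estimate. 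Likewise, the lower bound $A_N\ge\Psi(\bar m)$ does not follow from the sum rule unless the remainder is nonnegative, which you have just argued it need not be. In the end both you and the paper must fall back on the self-averaging of $m_\sigma,m_\tau$ along the interpolation path to kill the fluctuation integral -- the paper says so explicitly, you say it implicitly through ``the two bounds coincide'' -- and that is the piece of the argument that would need to be made precise for either version to be a complete proof.
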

\begin{remark}
Equations $(\ref{eq:selfcontT0})$ can be obtained by extremizing the free energy expressed in Theorem $1$ with respect to the trial parameters $\bar{m}_{\sigma}$ and $\bar{m}_{\tau}$. We stress that where  $\beta_{\sigma}\beta_{\tau}\geq\beta_{\sigma\tau}^{2}$ the optimal parameters impose a maximum for the pressure landscape while in the opposite region it is a saddle point only. On the critical surface $\beta_{\sigma}\beta_{\tau}=\beta_{\sigma\tau}^{2}$ the pressure has a flat direction and the model can be described through a single order parameter that is a linear combination of the two magnetizations, i.e. $\bar{\epsilon}=\sqrt{\beta_{\sigma}}\alpha\bar{m}_{\sigma}+\sqrt{\beta_{\tau}}\left(1-\alpha\right)\bar{m}_{\tau}$ and
\be
A(\boldsymbol{\beta},\alpha)=\ln 2 +\alpha\ln\cosh(\sqrt{\beta_{\sigma}}\bar{\epsilon})+(1-\alpha)\ln\cosh(\sqrt{\beta_{\tau}}\bar{\epsilon})-\frac {\bar{\epsilon}^2} 2
\ee
with $\bar{\epsilon}$ satisfying
\be\label{eqgrondaia}
\bar{\epsilon}=\sqrt{\beta_{\sigma}}\alpha\tanh(\sqrt{\beta_{\sigma}}\bar{\epsilon})+\sqrt{\beta_{\tau}}(1-\alpha)\tanh(\sqrt{\beta_{\tau}}\bar{\epsilon}).
\ee
\end{remark}

%\begin{equation*}
%\begin{cases}
%\frac{\partial A}{\partial\bar{m}_{\sigma}}=\left(\beta_{\sigma}\alpha^{2}\right)\left(\left\langle \sigma_{i}\right\rangle _{t=0}-\bar{m}_{\sigma}\right)+\left(\beta_{\sigma\tau}\alpha\left(1-\alpha\right)\right)\left(\left\langle \tau_{i}\right\rangle _{t=0}-\bar{m}_{\tau}\right)=0\\
%\frac{\partial A}{\partial\bar{m}_{\tau}}=\left(\beta_{\sigma\tau}\alpha\left(1-\alpha\right)\right)\left(\left\langle \sigma_{i}\right\rangle _{t=0}-\bar{m}_{\sigma}\right)+\left(\beta_{\tau}\left(1-\alpha\right)^{2}\right)\left(\left\langle \tau_{i}\right\rangle _{t=0}-\bar{m}_{\tau}\right)=0
%\end{cases}
%\end{equation*}

It is worth noticing that in the limit of $\beta_{\sigma\tau}=0$ the two parties are non interacting and a convex linear combination of two standard Curie-Weiss pressure at suitable temperatures is obtained, while, for $\beta_{\sigma}=\beta_{\tau}=0$, the results developed in \cite{bipCWSK} for a bipartite system without monopartite interactions are recovered.\\
%----------------------------------------------------------------------------------------
%		FIRST APPROACH: FUNDAMENTAL THEOREM OF CALCULUS - F
%----------------------------------------------------------------------------------------
\subsection{First approach: Sum rule}
% Explanation of the technique
The method that in this section we adapt to fully interacting bipartite ferromagnets has been successfully applied in \cite{trial1, rsneuralanalog} for a huge class of single disordered system or systems in reciprocal interactions but without self-contributions. Here we show how it works in the larger case of complete topological interactions, starting with simpler case of the ferromagnetic couplings highlighting the perspective we want to follow. In the second half of the manuscript we will apply it to the disordered counterpart, which will require some more mathematical efforts.
\begin{figure}[!ht] \label{fig:negcoop}
\begin{center}
\includegraphics[width=10cm]{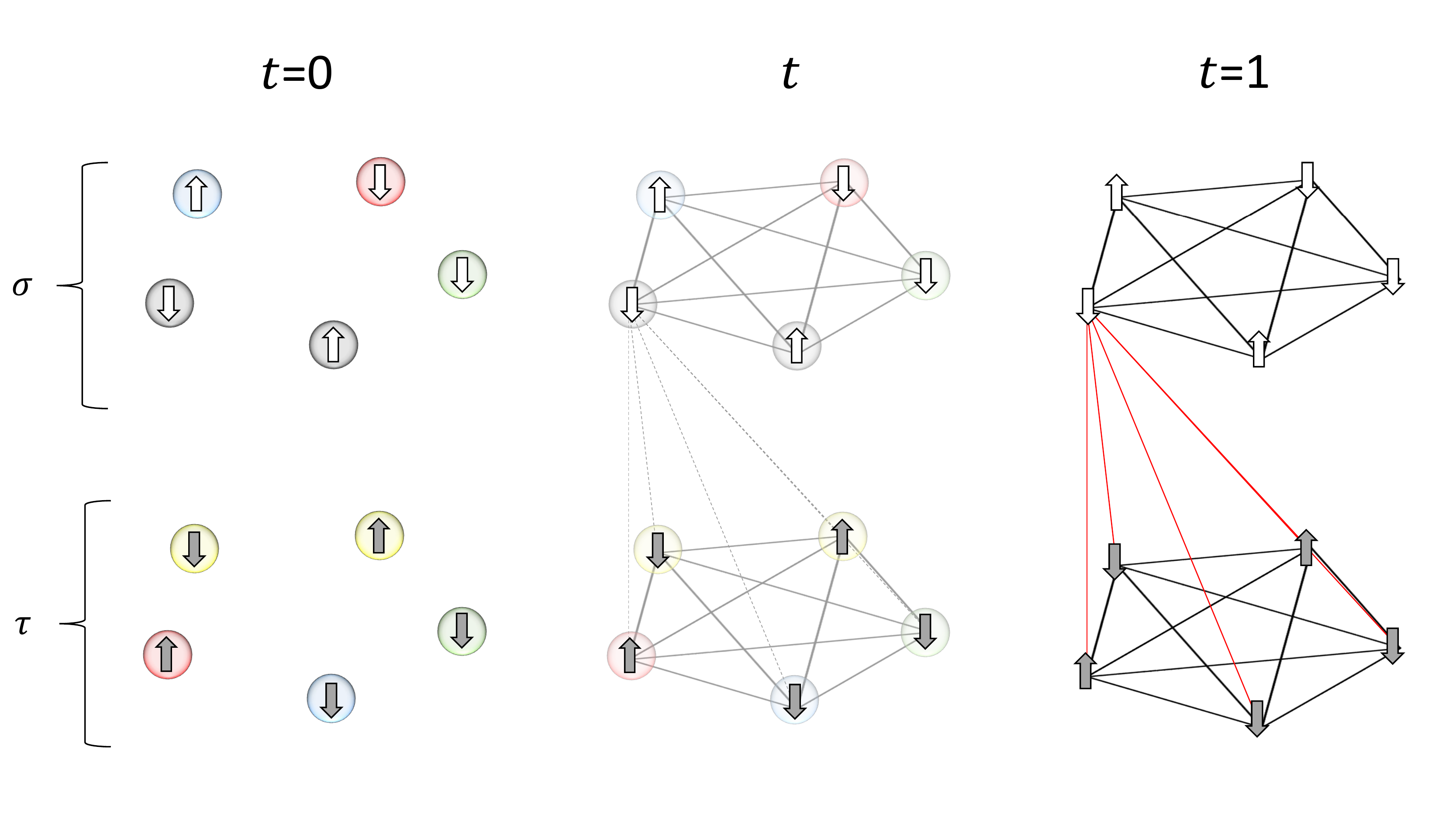}
\end{center}
\caption{Schematic representation of the morphism we perform through interpolation in the sum rule (first technique). The real system is the one on the right, which is obtained whenever $t=1$ is set, while on the left the system at $t=0$ is shown. Note that at $t=0$ sites are no longer communicating, and their reciprocal interactions are replaced by effective local fields, which are represented as colored surrounding spheres (different colors represent different fields). In the middle an intermediate situation with a generic $t$ is shown for the sake of completeness.}
\label{fig:BOLOGNESI}
\end{figure}
In a quick introductional summary, the technique consists of three steps:
\begin{itemize}
\item Through the introduction of an interpolating parameter $t\in [0,1]$, a new trial Hamiltonian is defined as the sum of two pieces, to which it reduces in the limit $t\to1$ and $t \to 0$: the former is the original model, which has to be solved, and the latter is spin system with a simpler one-body interaction with an external effective field that mirrors the real microscopic interactions in a pure mean field fashion, hence
\begin{equation*}
H(t)=\left(t\right) H_{\text{original}} + \left(1-t \right) H_{\text{one-body}}.
\end{equation*}
From the interpolating Hamiltonian, the definitions of interpolating partition function $Z_N(t)$ and pressure $A_N(t)$ naturally follow simply shifting $\exp(-\beta H) \to \exp(-\beta H(t))$.
\item Once the interpolating structure is defined, an interpolating procedure is needed: this role is played by the Fundamental Theorem of Calculus. The key point is that the pressure of the original model can be written as
\begin{equation*}\label{eq:AdA}
A_N=A_N(1)=A_N\left(0\right)+\int_{0}^{1}\frac{\partial A_N}{\partial t}dt.
\end{equation*}
In this way the problem is split into the calculation of two terms: $A_N\left(0\right)$ and $\int_{0}^{1}\frac{\partial A_N(t)}{\partial t}dt$\\
\item $A_N\left(0\right)$ can be easily calculated  because of the factorizability property of a one-body interaction. For what concerns $\frac{\partial A_N(t)}{\partial t}$, it can be written as the sum of a term $\bar{A}$ independent by $t$ and a rest $R(t)$ proportional to the fluctuations of an appropriately chosen order parameter, in such a way that
\begin{equation*}
A_N=\left( A_N(0) + \bar{A}\right)+\int_{0}^{1} R(t) dt
\end{equation*}
where $R(t)$ is the rest including all the fluctuations which one would like to delete or to reduce as much as possible, using the  self-averaging property of the order parameters, when it occurs.
\end{itemize}
In the concrete case of the ferromagnetic model introduced in the previous section (eq.$(1)$), we define the interpolating Hamiltonian  as
%\begin{alignat*}{2}
%H_{N}(t)&= -t\left[\frac{1}{N}\beta_{\sigma\tau}\sum_{i=1}^{N_{\sigma}}\sum_{j=1}^{N_{\tau}}\sigma_{i}\tau_{j}+\frac{1}{2N}\beta_{\sigma}\sum_{i,j}^{N_{\sigma}}\sigma_{i}\sigma_{j}+\frac{1}{2N}\beta_{\tau}\sum_{i,j}^{N_{\tau}}\tau_{i}\tau_{j}\right]+\\
%&-\left(1-t\right)\left[C_{\sigma}\sum_{i=1}^{N_{\sigma}}\sigma_{i}+C_{\tau}\sum_{i=1}^{N_{\tau}}\tau_{i}\right]
%\end{alignat*}
\begin{equation}
H_{N}(t)= -t\left[\frac{1}{N}\beta_{\sigma\tau}\sum_{i=1}^{N_{\sigma}}\sum_{j=1}^{N_{\tau}}\sigma_{i}\tau_{j}+\frac{1}{2N}
\beta_{\sigma}\sum_{i,j}^{N_{\sigma}}\sigma_{i}\sigma_{j}+\frac{1}{2N}\beta_{\tau}\sum_{i,j}^{N_{\tau}}\tau_{i}\tau_{j}\right]
-\left(1-t\right)\left[C_{\sigma}\sum_{i=1}^{N_{\sigma}}\sigma_{i}+C_{\tau}\sum_{i=1}^{N_{\tau}}\tau_{i}\right],
\end{equation}
where $C_{\sigma}$ and $C_{\tau}$ are constants that have to be determined \textit{a posteriori}. At $t=0$ the intensive pressure can be easily be computed as
\begin{eqnarray}\label{feric}
A_{N}\left(0\right)&=&\frac{1}{N}\ln Z_{N}(0)= \frac{1}{N}\ln\left[\sum_{\boldsymbol{\sigma,\tau}}e^{-H_{N}\left(0\right)}\right]\\
&=&\frac{1}{N}\ln\left[\left(\prod_{i=1}^{N_{\sigma}}\sum_{\sigma_i}e^{C_{\sigma}\sigma_{i}}\right)\left(\prod_{i=1}^{N_{\tau}}\sum_{\tau_i}e^{C_{\tau}\tau_{i}}\right)\right]= \ln2+\alpha\ln\cosh\left(C_{\sigma}\right)+\left(1-\alpha\right)\ln\cosh\left(C_{\tau}\right)\nonumber
\end{eqnarray}
%
%\begin{alignat}{2}\label{feric}
%A_{N}\left(0\right)=&\frac{1}{N}\ln Z_{N}(0)=\\
%=&\frac{1}{N}\ln\left[\sum_{\boldsymbol{\sigma,\tau}}e^{-H_{N}\left(0\right)}\right]=\\
%=&\frac{1}{N}\ln\left[\left(\prod_{i=1}^{N_{\sigma}}\sum_{\sigma_i}e^{C_{\sigma}\sigma_{i}}\right)\left(\prod_{i=1}^{N_{\tau}}\sum_{\tau_i}e^{C_{\tau}\tau_{i}}\right)\right]= \\
%=& \ln2+\alpha\ln\cosh\left(C_{\sigma}\right)+\left(1-\alpha\right)\ln\cosh\left(C_{\tau}\right)
%\end{alignat}
Then, the derivative of the pressure with respect to the interpolating parameter is performed as
\begin{eqnarray}\label{eq:spec}
\frac{\partial A_{N}\left(t\right)}{\partial t}&=&\frac{1}{N}\frac{\partial}{\partial t}\left[\ln Z_{N}\left(t\right)\right]=\frac{1}{N}\frac{1}{Z_{N}\left(t\right)}\sum_{\boldsymbol{\sigma,\tau}}\frac{\partial}{\partial t}e^{-H_{N}\left(t\right)}\nonumber \\
&=&\beta_{\sigma\tau}\alpha\left(1-\alpha\right)\left\langle m_{\sigma}m_{\tau}\right\rangle _{t}+\frac{\beta_{\sigma}\alpha^2}{2}\left\langle m_{\sigma}^{2}\right\rangle _{t}+\frac{\beta_{\tau}\left(1-\alpha\right)^2}{2}\left\langle m_{\tau}^{2}\right\rangle _{t} - C_{\sigma}\alpha\left\langle m_{\sigma}\right\rangle _{t}-C_{\tau}\left(1-\alpha\right)\left\langle m_{\tau}\right\rangle _{t}
\end{eqnarray}
Now the last expression has to be written in terms of the fluctuations of the order parameters. Defining $a$, $b$ and $c$ as free coefficients, the generic form of the fluctuations of the order parameters is
\begin{eqnarray}\label{eq:gen}
&&a\left<\left(m_{\sigma}-\bar{m}_{\sigma}\right)\left(m_{\tau}-\bar{m}_{\tau}\right)\right>_{t}+b\left<\left(m_{\sigma}-\bar{m}_{\sigma}\right)^{2}\right>_{t}+c\left<\left(m_{\tau}-\bar{m}_{\tau}\right)^{2}\right>_{t}=\\
&&=a\left<m_{\sigma}m_{\tau}\right>_{t}+b\left<m_{\sigma}^{2}\right>_{t}+c\left<m_{\tau}^{2}\right>_{t}+ \left(-a\bar{m}_{\tau}-2b\bar{m}_{\sigma}\right)\left<m_{\sigma}\right>_{t}+\left(-a\bar{m}_{\sigma}-2c\bar{m}_{\tau}
\right)\left<m_{\tau}\right>_{t}+\left[a\bar{m}_{\sigma}\bar{m}_{\tau}+b\bar{m}_{\sigma}^{2}+c\bar{m}_{\tau}^{2}\right].\nonumber
\end{eqnarray}
Hence, we can identify each coefficient of the equation ($\ref{eq:gen}$) with the ones of the specific expression ($\ref{eq:spec}$), in such a way that we can fix the coefficients $C_{\sigma}$ and $C_{\tau}$ as
\begin{equation*}
\begin{array}{ccc}
C_{\sigma}=\alpha\beta_{\sigma}\bar{m}_{\sigma}+\beta_{\sigma\tau}\left(1-\alpha\right)\bar{m}_{\tau} & ; & C_{\tau}=\beta_{\sigma\tau}\alpha\bar{m}_{\sigma}+(1-\alpha)\beta_{\tau}\bar{m}_{\tau}.
\end{array}
\end{equation*}
Using equations ($\ref{feric}$) and ($\ref{eq:spec}$) we can then write down the following sum rule
\begin{eqnarray}\label{fersumrule}
A_N(\boldsymbol{\beta},\alpha)&=&\ln2+ \alpha\ln\cosh\left(\beta_{\sigma}\alpha\bar{m}_{\sigma}+\beta_{\sigma\tau}\left(1-\alpha\right)\bar{m}_{\tau}\right)
+\left(1-\alpha\right)\ln\cosh\left(\beta_{\sigma\tau}\alpha\bar{m}_{\sigma}+\beta_{\tau}\left(1-\alpha\right)\bar{m}_{\tau}\right)\\
\nonumber&-&\left[\beta_{\sigma\tau}\alpha\left(1-\alpha\right)\bar{m}_{\sigma}\bar{m}_{\tau}+\frac 1 2 \beta_{\sigma}\alpha^{2}\bar{m}_{\sigma}^{2}+\frac 1 2 \beta_{\tau}\left(1-\alpha\right)^{2}\bar{m}_{\tau}^{2}\right]+ R_N(t)
\end{eqnarray}
where
\begin{equation}
R_N(t)=\int_0^1 dt\left[ a\left<\left(m_{\sigma}-\bar{m}_{\sigma}\right)\left(m_{\tau}-\bar{m}_{\tau}\right)\right>_{t}
+b\left<\left(m_{\sigma}-\bar{m}_{\sigma}\right)^{2}\right>_{t}+c\left<\left(m_{\tau}-\bar{m}_{\tau}\right)^{2}\right>_{t}\right].
\end{equation}
Since in the ferromagnetic models the magnetizations are self-averaging in the thermodynamic limit, we can argue that, for a particular choice of the parameters $\bar{m}_{\sigma}$ and $\bar{m}_{\tau}$ (that is by extremizing the pressure with respect to them) we can neglect the rest in $(\ref{fersumrule})$, namely
$$
\lim_{N \to \infty}R_N(t)=0, 
$$
and in the same limit $A_N \to A$ (where $A$ represents the pressure evaluated for $N \to \infty$), that completes the proof of Theorem $\ref{thFer}$.  Note that, by deriving equation ($\ref{fersumrule}$) with respect to $\bar{m}_{\sigma}$ and $\bar{m}_{\tau}$ we get
\begin{equation}\label{ferextr}
\begin{cases}
\frac{\partial A}{\partial\bar{m}_{\sigma}}=\beta_{\sigma}\alpha^{2}\left(\left\langle \sigma_{i}\right\rangle _{t=0}-\bar{m}_{\sigma}\right)+\beta_{\sigma\tau}\alpha\left(1-\alpha\right)\left(\left\langle \tau_{i}\right\rangle _{t=0}-\bar{m}_{\tau}\right)=0,\nonumber\\
\frac{\partial A}{\partial\bar{m}_{\tau}}=\beta_{\sigma\tau}\alpha\left(1-\alpha\right)\left(\left\langle \sigma_{i}\right\rangle _{t=0}-\bar{m}_{\sigma}\right)+\beta_{\tau}\left(1-\alpha\right)^{2}\left(\left\langle \tau_{i}\right\rangle _{t=0}-\bar{m}_{\tau}\right)=0,
\end{cases}
\end{equation}
from which we can argue that, as soon as $\beta_{\sigma}\beta_{\tau}\neq \beta_{\sigma\tau}^2$, the optimal order parameters satisfy
\begin{eqnarray}
\bar{m}_{\sigma}&=& \left\langle \sigma_{i}\right\rangle _{t=0} \nonumber\\
\bar{m}_{\tau}&=& \left\langle \tau_{i}\right\rangle _{t=0},
\end{eqnarray}
i.e. the magnetizations of the interpolating system at $t=0$ are the same of the original system's ones. From the equations ($\ref{ferextr}$) we can see that, on the critical surface $\beta_{\sigma}\beta_{\tau}= \beta_{\sigma\tau}^2$, we have just one single degenerate self consistent equation, that is eq.($\ref{eqgrondaia}$) for an order parameter $\epsilon(\boldsymbol{\sigma,\tau})=\sqrt{\beta_{\sigma}}\alpha m_{\sigma}(\boldsymbol{\sigma,\tau})+\sqrt{\beta_{\tau}}\left(1-\alpha\right)m_{\tau}(\boldsymbol{\sigma,\tau})$ which is a linear combination of the two magnetizations. In this region of the phase space $\epsilon(\boldsymbol{\sigma,\tau})$ is self averaging but the two magnetizations can fluctuate. This phenomenon is very clear for example in the special case in which $\beta_{\sigma}=\beta_{\tau}=\beta_{\sigma\tau}=\beta$, where we cannot distinguish any longer between the two parties: the system is a single Curie Wiess model, of size $N$, characterized by a single order parameter that is the global magnetization $M(\boldsymbol{\sigma,\tau})=\alpha m_{\sigma}(\boldsymbol{\sigma})+(1-\alpha)m_{\tau}(\boldsymbol{\tau})=\beta^{-1/2}\epsilon(\boldsymbol{\sigma,\tau})$.

%----------------------------------------------------------------------------------------
%		SECOND APPROACH: MECHANICAL ANALOGY - F
%----------------------------------------------------------------------------------------
\subsection{Second approach: The Hamilton-Jacobi framework}
% Explanation of the technique
Besides the fundamental theorem of calculus, another interpolation method, developed in \cite{Gsumrules}, can be used. The latter is based on a mechanistic interpretation of the statistical mechanic and thermodynamic quantities defined at the beginning of this section.
\newline
The main idea is the following: the problem of obtaining an explicit expression for the pressure of the model ($\ref{eq:Hfer}$) in the thermodynamic limit and in terms of its order and tunable parameters, is translated in solving an Hamilton-Jacobi equation, where the pressure plays as the action, with suitable boundary conditions.
For this purpose, with the freedom of thinking at the interpolating parameters $t \in \mathcal{R}^+$ and $x \in \mathcal{R}$ as fictitious time and space respectively, we first define an interpolating Hamiltonian as
\begin{eqnarray}
H_{N}\left(t,x\right)=&-&t\left[N_{\tau}\alpha\beta_{\sigma\tau}m_{\sigma}m_{\tau}+\frac{N_{\sigma}}{2}\alpha\beta_{\sigma}m_{\sigma}^{2}+\frac{N_{\tau}}{2}\left(1-\alpha\right)\beta_{\tau}m_{\tau}^{2}\right]\\
&-&\left(1-t\right)\left[\frac{N_{\sigma}}{2}\beta'_{\sigma}m_{\sigma}^{2}+\frac{N_{\tau}}{2}\beta'_{\tau}m_{\tau}^{2}\right] -x N D_N(\sigma,\tau),
\end{eqnarray}
where, introducing a free parameter $a\in(0,\infty)$ (whose practical convenience will be evident later), we  defined
\begin{equation*}
\begin{array}{ccc}
\beta'_{\sigma}=\alpha\left[a^2\beta_{\sigma\tau}+\beta_{\sigma}\right] & ; & \beta'_{\tau}=\left(1-\alpha\right)\left[a^{-2}\beta_{\sigma\tau}+\beta_{\tau}\right],
\end{array}
\end{equation*}
and the order parameter $D(\sigma,\tau)$ (that is just a linear combination of the magnetizations)
\begin{equation}\label{eq:HJD}
D(\sigma,\tau)=\sqrt{\beta_{\sigma\tau}}\left[\alpha a m_{\sigma}-\left(1-\alpha\right)a^{-1}m_{\tau}\right].
\end{equation}
Then, from the definition of the interpolating Hamiltonian we introduce, as usual, an interpolating pressure $S_N(t,x)= N^{-1} \ln \sum_{\sigma \ \tau}e^{-H_N(t,x)}$, which we named $S$ as it plays the role of an action  in the $(t,x)$ space.
\newline
Performing the temporal and spatial derivatives of $S_{N}\left(t,x\right)$ and denoting with a subscript $(t,x)$ the averages performed within the extended Boltzmann measure weighted by $H_N(t,x)$\footnote{Note that this extended average reduces to the canonical one whenever measured at $t=1$ and $x=0$.}, we get
\begin{alignat*}{2}
\frac{\partial S_{N}\left(t,x\right)}{\partial t}=&-\frac{1}{2}\left\langle D^{2}\right\rangle_{(t,x)}\\
\frac{\partial S_{N}\left(t,x\right)}{\partial x}=&\left\langle D\right\rangle_{(t,x)}\\
\frac{\partial^{2}S_{N}\left(t,x\right)}{\partial x^{2}}=&N\left(\left\langle D^{2}\right\rangle_{(t,x)} -\left\langle D\right\rangle_{(t,x)}^{2}\right),
\end{alignat*}
thus, directly by construction, we can write the following Hamilton-Jacobi equation for $S_N(t,x)$:
\be\label{HJFerr}
\partial_t S_N(t,x)+\frac 1 2 (\partial_x S_N(t,x))^2 + V_N(t,x)=0,
\ee
where the potential $V_N(t,x)$ is defined as
\begin{equation}\label{eq:mollifier}
V_N(t,x)= -\frac 1 2 \left(\left\langle D^{2}\right\rangle_{(t,x)} -\left\langle D\right\rangle_{(t,x)}^{2}\right)= \frac 1 {2N} \frac{\partial^{2}S_{N}\left(t,x\right)}{\partial x^{2}}.
\end{equation}
Because of the self-averaging of the order parameters\footnote{Alternatively, instead of assuming self-averaging for the vector $\langle D \rangle$ in the thermodynamic limit ($\langle D \rangle =\lim_{N \to \infty} \langle D_N \rangle$), it is possible to obtain it simply by noticing the $N^{-1}$ pre-factor at the r.h.s. of equation ($\ref{eq:mollifier}$), multiplying a bounded function.}, the potential becomes negligible when the size of the system grows to infinity, hence $S(t,x)$ satisfies, in the thermodynamic limit, a free Hamilton-Jacobi equation.
\newline
We can easily solve it by noting that the velocity field $D(t,x)=\partial_x S(t,x)=\left\langle D \right\rangle_{(t,x)}$ is constant along the trajectories $x=x_0+D(t,x)t$, such that  $D(t,x)$ can be determined from the relation
\be
D(t,x)=D(0,x_0)=\partial_x S(0,x)|x=x_0(t,x)
\ee
that plays as a self consistent equation for $D$.
\newline
The general expression for the action $S(t,x)$ can be obtained as its value evaluated in a point $S(0,x_0)$ (the Cauchy condition) plus the integral of the Lagrangian $\mathcal{L}(t,x)$ over time. Note that here, as the trajectories of such a fictitious motion are straight lines, or alternatively because the potential is zero,
the Lagrangian reads off simply as $\mathcal{L}(t,x) = \langle D \rangle_{t,x}^2 /2$, hence overall we can write
\be
S\left(t,x\right) = S\left(t=0,x=x_0\right) + \int_0^t \mathcal{L}\left(t',x\right)dt' = S\left(0,x_0\right)+\frac t 2 \langle D^2 \rangle_{(t,x)}.
\ee
Note also that the Cauchy starting point implicitly allows for factorization over the sites $\bold{\sigma}, \bold{\tau}$ as at $t=0$ the (potentially tricky) two-body interactions disappear.
\newline
All the quantities we need can then be derived simply by computing the interpolating pressure at $t=0$,  which is
\begin{equation}
S_{N}\left(0,x\right)=\frac{1}{N}\ln\sum_{\boldsymbol{\sigma,\tau} }e^{\left(\beta'_{\sigma}\frac{N_{\sigma}}{2}m_{\sigma}^{2}+\beta'_{\tau}\frac{N_{\tau}}{2}m_{\tau}^{2}\right)} e^{\left(x\sqrt{\beta_{\sigma\tau}}\right)N\left(\alpha a m_{\sigma}-\left(\left(1-\alpha\right) a^{-1}\right)m_{\tau}\right)}
\end{equation}
that is the pressure of two independent Curie-Wiess model with external fields $h$, i.e.
\begin{equation}\label{eq:St0}
S\left(0,x\right)=\alpha A^{CW}\left(\beta'_{\sigma},x a \sqrt{\beta_{\sigma\tau}}\right)+\left(1-\alpha\right)A^{CW}\left(\beta'_{\tau},-x a^{-1}\sqrt{\beta_{\sigma\tau}}\right),
\end{equation}
where
\be
A^{CW}(\beta,h)=\log 2 + \log \cosh(\beta (m+h) )-\frac {\beta} 2 m^2.
\ee
and $m=m(\beta)$ is the solution of the self-consistent equation $m=\tanh(\beta m)$. Taking the derivative with respect to $x$ we get the initial condition for the velocity field
\be
D(0,x)=\partial_x S(0,x)= \sqrt{\beta_{\sigma\tau}}\left[\alpha a m(\beta '_{\sigma},x a \sqrt{\beta_{\sigma,\tau}})-(1-\alpha)a^{-1}m(\beta '_{\tau},-x a^{-1} \sqrt{\beta_{\sigma\tau}})\right].
\ee
At this point we can explicitely write down the self-consistent equation for the velocity field $D(t,x)$ that has to be the solution of
\be\label{velfield0}
D(t,x)=D(0,x_0)=\sqrt{\beta_{\sigma\tau}}\left[\alpha a m(\beta '_{\sigma},(x-D(t,x)t) a \sqrt{\beta_{\sigma,\tau}})-(1-\alpha)a^{-1}m(\beta '_{\tau},-(x-D(t,x)t) a^{-1} \sqrt{\beta_{\sigma\tau}})\right].
\ee
Finally, remembering that $x_0  = x - D(t,x) t$,  the pressure of the model can be written in terms of $D(t,x)$ as
\be\label{HJpressexp}
S(t,x)=S(0,x-D(x,t)t)+\frac t 2 D^2(t,x).
\ee
It is easy to check that, whenever evaluated at $t=1$ and $x=0$ the expression $(\ref{HJpressexp})$ does coincide with the expression  $(\ref{eq:FerPress})$  obtained through the first method. In fact, referring to equation ($\ref{velfield0}$), we can define
\begin{eqnarray}
\bar{m}_{\sigma}(\boldsymbol{\beta};a)&=&m(\beta '_{\sigma},-D(1,0)a \sqrt{\beta_{\sigma,\tau}}),\nonumber \\
\bar{m}_{\tau} (\boldsymbol{\beta};a)&=&m(\beta '_{\tau},D(1,0) a^{-1} \sqrt{\beta_{\sigma\tau}}).
\end{eqnarray}
Since $D(\boldsymbol{\beta})=D(1,0)=\sqrt{\beta_{\sigma\tau}}\left[\alpha a \bar{m}_{\sigma}-(1-\alpha a^{-1}m_{\tau})\right]$, $\bar{m}_{\sigma}$ and  $\bar{m}_{\tau}$ satisfy the following system of coupled equations
\begin{eqnarray}
\bar{m}_{\sigma}&=&m(\beta '_{\sigma},-D a \sqrt{\beta_{\sigma,\tau}})=\tanh(\alpha \beta_{\sigma}\bar{m}_{\sigma}+ (1-\alpha)\beta_{\sigma\tau}\bar{m}_{\tau} ),\nonumber \\
\bar{m}_{\tau}&=&m(\beta '_{\tau},D a^{-1} \sqrt{\beta_{\sigma,\tau}})=\tanh((1-\alpha) \beta_{\tau}\bar{m}_{\tau}+ \alpha\beta_{\sigma\tau}\bar{m}_{\sigma} ),
\end{eqnarray}
that is exactly the system defining the order parameters in the first method and reported in eq. $(\ref{eq:selfcontT0})$. Using this decomposition for $D(\boldsymbol{\beta})$ we can rewrite equation$(\ref{HJpressexp})$ once again obtaining  for the pressure the expression $(2)$, in full agreement with Theorem $1$ statements.
\newline
Note that, as it should be, since  $\bar{m}_{\sigma}(\boldsymbol{\beta};a)$ and $\bar{m}_{\tau}(\boldsymbol{\beta};a)$ do not depend on $a$, we get the same expression for the pressure of the model independently by the choice of $a$ in the interpolating procedure. This degree of freedom allows us to give a physical interpretation to the quantities $\bar{m}_{\sigma}$ and $\bar{m}_{\tau}$ as magnetizations ''completely inside'' the Hamilton-Jacobi framework. In fact, since
\be
D(\boldsymbol{\beta})=\sqrt{\beta_{\sigma\tau}}\left[\alpha a \left\langle m_{\sigma}\right\rangle -(1-\alpha)a^{-1}\left\langle m_{\tau}\right\rangle\right]=\sqrt{\beta_{\sigma\tau}}\left[\alpha a \bar{m}_{\sigma} -(1-\alpha)a^{-1} \bar{m}_{\tau}\right],
\ee
for every choice of the parameter $a$, we obtain $ \left\langle m_{\sigma}\right\rangle=\bar{m}_{\sigma}$ and $ \left\langle m_{\tau}\right\rangle=\bar{m}_{\tau}$.

\begin{remark}
We can use fruitfully the freedom in the choice of the free parameter $a$ by imposing that the velocity is zero when $x=0$. In this way $S(t,x)=S(0,x_0)$, i.e. the pressure of the model can be written as a convex linear combination of two non interacting single-party systems at suitable temperatures. We can do that by imposing
\be
\sqrt{\beta_{\sigma\tau}}\left[\alpha a \bar{m}_{\sigma} -(1-\alpha)a^{-1} \bar{m}_{\tau}\right]=0,
\ee
i.e. choosing $a=\sqrt{\frac{(1-\alpha)\bar{m}_{\tau}}{\alpha\bar{m}_{\sigma}}}$. In this way we can write
\be
A(\beta_{\sigma},\beta_{\tau},\beta_{\sigma\tau})=\alpha A^{CW}(\beta '_{\sigma})+(1-\alpha)A^{CW}(\beta '_{\tau}).
\ee
This result generalizes the decomposition introduced for the first time in \cite{bipCWSK}, concerning the bipartite systems without self interactions.
\end{remark}

%----------------------------------------------------------------------------------------
%		THIRD APPROACH: FOURIER FRAMEWORK - F
%----------------------------------------------------------------------------------------
\subsection{Third approach: The Fourier framework}
In line with the precedent remark, in this section we show a strategy easily obtainable revisiting the Hamilton-Jacobi scheme. In fact, instead of giving the solution of the model through ($\ref{velfield0}$) and($\ref{HJpressexp}$), the (Cole-Hopf transform of the)  function $S_N\left(t,x\right)$ can be studied in its conjugate Fourier space $\left(t,k\right)$ and solved via standard Green function plus convolution theorem route as summarized in the following adaptation of the Lax Theorem \cite{lax}.
%Lemma
\begin{theorem}\label{le:Fourier}
Using $S_0(x)$ to quantify the value of the action at $t=0$, and a subscript $N$ to denote averages of observable performed at finite $N$ with its lacking accounting for quantities evaluated in the thermodynamic limit, then for $N \to \infty$ the solution of
\begin{equation*}
\begin{cases}
\partial_tS_{N}\left(t,x\right)+\frac{1}{2}\left(\partial_xS_{N}\left(t,x\right)\right)^2+\frac{1}{2N}\partial^2_{x}S_{N}\left(t,x\right)= 0, & \\
S_{N}\left(0,x\right)=S_0(x), &
\end{cases}
\end{equation*}
namely an explicit expression for the action $S(t,x)$, and the associated Burger problem
\begin{equation*}
\begin{cases}
\partial_t D_N\left(t,x\right)+D_N\left(t,x\right)\partial_x D_N\left(t,x\right)+\frac{1}{2N}\partial^2_{x}D_N\left(t,x\right)= 0, & \\
D_N\left(0,x\right)=D_0(x), &
\end{cases}
\end{equation*}
is given by the Legendre transform of its Cauchy condition on the action, hence
\begin{equation}\label{eq:finFourier}
S\left(t,x\right) = \inf_{y}\left\{ \frac{\left(x-y\right)^2}{2t} + S_0\left(y\right)\right\}=\frac{\left(x-\hat{y}\right)^2}{2t} + S_0\left(\hat{y}\right),
\end{equation}
with $\hat{y}$ minimizer and $x=\hat{y}+D_0(\hat{y})t$.
\end{theorem}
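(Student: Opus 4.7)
My plan is to linearise the viscous Hamilton--Jacobi equation by a Cole--Hopf substitution, solve the resulting linear heat equation by convolution with its Gaussian Green's function (this is the Fourier / Green-function step quoted in the preamble), and then extract the $N\to\infty$ asymptotics of the resulting integral by a Laplace saddle-point argument. The Burgers problem for $D_N=\partial_x S_N$ falls out as a by-product, since the field $D$ is the $x$-derivative of the action $S$.

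Concretely, after the Cole--Hopf change of variables $\phi_N:=e^{-N S_N}$, with sign chosen so as to cancel the quadratic nonlinearity $\tfrac12(\partial_x S_N)^2$ against the non-linear part of $\partial_x^2\log\phi_N$, the Cauchy problem becomes the linear heat equation
\be
\partial_t \phi_N \;=\; \frac{1}{2N}\,\partial_x^2 \phi_N, \qquad \phi_N(0,x) \;=\; e^{-N S_0(x)}.
\ee
Taking the Fourier transform in $x$ reduces this to a first-order ODE in $t$ with multiplier $e^{-k^2 t/(2N)}$; inverting yields the explicit convolution representation
\be
\phi_N(t,x) \;=\; \sqrt{\frac{N}{2\pi t}} \int_{\mathbb{R}} \exp\!\left\{-N\!\left[\frac{(x-y)^2}{2t} + S_0(y)\right]\right\}\,dy,
\ee
which is precisely the ``Green function plus convolution theorem'' route anticipated in the statement.

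Applying Laplace's method to this integral as $N\to\infty$ (the Gaussian prefactor contributes only at the subleading $N^{-1}\log N$ scale) gives
\be
S(t,x)\;=\;-\lim_{N\to\infty}\tfrac{1}{N}\log\phi_N(t,x)\;=\;\inf_{y\in\mathbb{R}}\!\left\{\frac{(x-y)^2}{2t}+S_0(y)\right\},
\ee
which is exactly (\ref{eq:finFourier}). The stationarity condition at the minimiser $\hat y$ reads $(x-\hat y)/t=S_0'(\hat y)=D_0(\hat y)$, i.e.\ $x=\hat y + t\, D_0(\hat y)$, the claimed characteristic identity. By the envelope theorem applied to the Hopf--Lax formula, $D(t,x)=\partial_x S(t,x)=(x-\hat y)/t=D_0(\hat y)$, so the inviscid velocity field is transported along its own characteristics and thus solves the limiting Burgers equation.

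The main obstacle is the justification of the Laplace step: one needs to ensure existence, uniqueness and non-degeneracy of the minimiser $\hat y$, together with enough growth of $S_0$ at infinity to dominate the integral. Uniqueness of $\hat y$ is precisely the no-shock condition for the limiting inviscid Burgers flow, so in the pre-shock region where $x=\hat y+D_0(\hat y)t$ is uniquely invertible the Hopf--Lax formula delivers a smooth solution and the saddle-point argument is routine; in the bipartite applications at hand the Cauchy datum $S_0(x)$ is the pressure of a non-interacting spin system, which is smooth, strictly convex and with uniformly bounded second derivative, so these requirements are comfortably met in the replica-symmetric phase targeted in this paper.
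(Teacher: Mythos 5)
Your proposal follows exactly the route taken in the paper: Cole--Hopf transform $\phi_N=e^{-NS_N}$ reduces the viscous Hamilton--Jacobi problem to a linear heat equation, Fourier transformation in $x$ turns it into a trivial ODE whose solution inverts to the Gaussian Green-function convolution, and the $N\to\infty$ Laplace/saddle-point asymptotics of $-\frac1N\log\phi_N$ yields the Hopf--Lax infimum formula. You add two things the paper leaves unstated but which are genuinely useful: the explicit derivation of the Burgers/characteristic picture via the envelope theorem applied to the Hopf--Lax formula (showing $D=\partial_x S$ is transported along $x=\hat y+tD_0(\hat y)$), and a clear statement of the regularity and no-shock hypotheses under which the Laplace step and the uniqueness of the minimizer $\hat y$ can be justified, together with the observation that the Cauchy datum coming from a non-interacting spin model meets these hypotheses in the replica-symmetric regime. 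One small remark worth making in passing: carrying out the Cole--Hopf substitution in full detail shows that $\phi_N=e^{-NS_N}$ satisfies the forward heat equation $\partial_t\phi_N=\tfrac{1}{2N}\partial_x^2\phi_N$ precisely when the viscosity term in the Hamilton--Jacobi equation carries a minus sign, $\partial_t S_N+\tfrac12(\partial_x S_N)^2-\tfrac{1}{2N}\partial_x^2S_N=0$ (this is also the sign consistent with the identity $\partial_x^2 S_N=N(\langle D^2\rangle-\langle D\rangle^2)\geq 0$ established earlier in the paper); the sign displayed in the theorem statement is a typo that neither the paper's proof nor your writeup explicitly flags, although the final Hopf--Lax formula is unaffected since the mollifier disappears in the thermodynamic limit.
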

\begin{proof}
First we perform the following Cole-Hopf transform on $S_N\left(t,x\right)$
\begin{equation}
\Psi_N(t,x)\coloneqq e^{-N S_N\left(t,x\right)},
\end{equation}
that, by definition, satisfies the following heat equation:
\begin{equation}
\frac{\partial \Psi_N(t,x)}{\partial t} - \frac{1}{2N} \frac{\partial^2 \Psi_N(t,x)}{\partial x^2} =0.
\end{equation}
Now, calling its Fourier transform $\hat{\Psi}_N(t,k)$, clearly
\begin{equation}
\partial_t \hat{\Psi}_N(t,k) + \frac{k^2}{2N}\hat{\Psi}_N(t,k)=0,
\end{equation}
whose solution is given by
\begin{equation}
\hat{\Psi}_N(t,k) = \hat{\Psi}_0(k) \exp( - \frac{k^2}{2 N} t ).
\end{equation}
Coming back to the original space we get
\begin{equation*}\label{psi}
\Psi_N(t,x)=  \int d y\, G_t(x-y)  \Psi_0(y) = \sqrt{\frac{N}{2 \pi t}}  \int d y  \,  e^{-N\frac{(x-y)^2}{2t} } \Psi_0(y)
\end{equation*}
where $G_t(x-y)$ is the Green propagator. Recalling the definition of $\Psi_N(t,x)$ we get
\begin{equation}
S_N\left(t,x\right) = - \frac{1}{N} \log \Psi_N (t,x) = - \frac{1}{N} \ln\sqrt{\frac{N}{2 \pi t}}  \int d y \, e^{-N\left( \frac{(x-y)^2}{2t} + S_{0}\left(y\right)\right)}
\end{equation}
that can be computed in the thermodynamic limit through the saddle-point technique obtaining
\begin{equation}
S(t,x) = \inf_{y}\left\{ \frac{(x-y)^2}{2t} + S_0\left(y\right)\right\}.
\end{equation}
\end{proof}
Using the explicit definition of $S(t,x)$, once computed the initial condition ($\ref{eq:St0}$), we can use Lemma $\ref{le:Fourier}$ and recover exactly equation ($\ref{HJpressexp}$) from which all the considerations of the previous section hold.

%----------------------------------------------------------------------------------------
%
%		DISORDERED CASE: Replic Symmetric Approximation
%
%----------------------------------------------------------------------------------------
\section{Disordered case: Replica Symmetric Approximation}
%----------------------------------------------------------------------------------------
%		THE MODEL - D
%----------------------------------------------------------------------------------------
\subsection{The Model}
In this second part of the paper we study a fully interacting bipartite spin glass. Namely we investigate the disordered counterpart of the model (\ref{eq:Hfer}) where now the coupling may assume  both positive and negative values allowing for frustration. Thus, besides a different normalization of the Hamiltonian in order to ensure the standard extensive linear growth  of the thermodynamical observables with the size of the system, the exchange interactions now are independently drawn at random from a Gaussian distribution $\mathcal{N}(0,1)$, hence
\begin{equation}\label{eq:Hsk}
H_N(\sigma,\tau;\mathbf{J})=-\frac{1}{\sqrt{N}}\beta_{\sigma\tau}\sum_{i=1}^{N_{\sigma}}\sum_{j=1}^{N_{\tau}}J_{ij}^{\sigma\tau}\sigma_{i}\tau_{j}-\frac{1}{\sqrt{2N_{\sigma}}}\beta_{\sigma}\sum_{i,j}^{N_{\sigma}}J_{ij}^{\sigma}\sigma_{i}\sigma_{j}-\frac{1}{\sqrt{2N_{\tau}}}\beta_{\tau}\sum_{i,j}^{N_{\tau}}J_{ij}^{\tau}\tau_{i}\tau_{j},
\end{equation}
The factor $1/\sqrt{2}$, when present, ensures the contribution of each couple of spins to count just once. Further, as in the ferromagnetic case, each contribution is weighted with a $\beta$-parameter, modulating the relative strength between interactions of different nature (mono-partite or bipartite) and within each party. Then, one can define easily the statistical mechanics machinery as before, this time introducing replicas too. Thus, using $\mathbb{E}$ to depict the average over the Gaussian couplings, we have:
\begin{equation*}
\begin{array}{cc}
\text{Partition function} & Z_N=\sum_{\{\sigma,\tau\}}e^{-\beta H_N(\sigma,\tau;\mathbf{J})}\\
&\\
\text{Boltzmann average} & \omega_N(\mathcal{O};\mathbf{J})=Z_N^{-1}\sum_{\{\sigma,\tau\}} \mathcal{O}e^{-\beta H_N(\sigma,\tau;\mathbf{J})}\\
&\\
\text{Product measure over $S$ replicas} & \Omega=\omega_{1} \otimes ...\otimes \omega_{s}\\
&\\
\text{Quenched state} & \left<\mathcal{O} \right>=\mathbb{E}\left[\Omega \left(\mathcal{O} \right)\right]\\
&\\
\text{Overlap of the $\sigma$ party} & q_{\sigma^a\sigma^b}=\frac{1}{N_{\sigma}}\sum_i\sigma_{i}^{a}\sigma_{i}^{b} \\
&\\
\text{Overlap of the $\tau$ party} & q_{\tau^{a}\tau^{b}}=\frac{1}{N_{\tau}}\sum_\mu \tau_{i}^{a}\tau_{i}^{b} \\
&\\
\text{Quenched intensive pressure} & A\left(\alpha,\boldsymbol{\beta}\right)=\lim_{N \to \infty} A_{N}\left(\beta_{\sigma},\beta_{\tau},\beta_{\sigma\tau}\right)=\lim_{N \to \infty} \frac{1}{N}\mathbb{E}\ln Z_{N}\left(\beta_{\sigma},\beta_{\tau},\beta_{\sigma\tau}\right).
\end{array}
\end{equation*}
%
%\begin{equation*}
%\begin{array}{cc}
%\text{Overlap of the $\sigma$ party} & q_{\sigma^a\sigma^b}=\frac{1}{N_{\sigma}}\sum_i\sigma_{i}^{a}\sigma_{i}^{b} \\
%&\\
%\text{Overlap of the $\tau$ party} & q_{\tau^{a}\tau^{b}}=\frac{1}{N_{\tau}}\sum_\mu \tau_{i}^{a}\tau_{i}^{b} \\
%&\\
%\text{Boltzmann measure} & \omega_N(\mathcal{O};\mathbf{J})=Z_N^{-1}\sum_{\{\sigma,\tau\}} \mathcal{O}e^{-\beta H_N(\sigma,\tau;\mathbf{J})}\\
%&\\
%\text{Partition function} & Z_N=\sum_{\{\sigma,\tau\}}e^{-\beta H_N(\sigma,\tau;\mathbf{J})}\\
%&\\
%\text{Product measure over $S$ replicas} & \Omega=\omega_{1} \otimes ...\otimes \omega{s}\\
%&\\
%\text{Quenched state} & \left<\mathcal{O} \right>=\mathbb{E}\left[\Omega \left(\mathcal{O} \right)\right]\\
%&\\
%\text{Quenched intensive pressure} & A\left(\alpha,\boldsymbol{\beta}\right)=\lim_{N \to \infty} A_{N}\left(\beta_{\sigma},\beta_{\tau},\beta_{\sigma\tau}\right)=\lim_{N \to \infty} \frac{1}{N}\mathbb{E}\ln Z_{N}\left(\beta_{\sigma},\beta_{\tau},\beta_{\sigma\tau}\right)
%\end{array}
%\end{equation*}
As usual the (quenched) free energy $f(\alpha,\beta)$ is related to the (quenched) pressure $A(\alpha,\beta)$ via $A(\alpha,\beta)=-\beta f(\alpha,\beta)$.
Note that in the rest of the paper we will set $\beta=1$ without loss of generality as we can restore the dependence  by $\beta$ simply rescaling the couplings $\beta_x\to\beta\beta_x$, with $x=\sigma,\tau,\sigma\tau$.
As in the first part of the paper, the expression of the quenched pressure in the replica symmetric approximation is determined using the three different techniques described before. Nevertheless, the presence of the overlaps instead of the magnetizations of the spins implies slightly different procedures in the proofs with respect to the ferromagnetic case. Even so, all the strategies produce the same solution as stated in the following
\begin{theorem}\label{thSK}
The Replica Symmetric Approximation for the intensive pressure of the model defined in $(\ref{eq:Hsk})$ reads as
\begin{eqnarray}\label{eq:SKPress}
\nonumber && A^{RS}(\alpha,\boldsymbol{\beta})=\ln2 \\\label{Ars}
\nonumber && +\alpha \int d\mu(z)\ln\cosh\left(z\sqrt{\left(\beta_{\sigma}^{2}\right)\bar{q}_{\sigma\sigma'}+\beta_{\sigma\tau}^{2}\left(1-\alpha\right)\bar{q}_{\tau\tau'}}\right)+
\left(1-\alpha\right)\int d\mu (z)\ln\cosh\left(z\sqrt{\left(\beta_{\sigma\tau}^{2}\alpha\right)\bar{q}_{\sigma\sigma'}+\left(\beta_{\tau}^{2}\right)\bar{q}_{\tau\tau'}}\right)+\\
\nonumber && +\frac{\beta_{\sigma}^{2}}{4}\alpha\left(\bar{q}_{\sigma\sigma'}-1\right)^{2}+\frac{\beta_{\tau}^{2}}{4}\left(1-\alpha\right)\left(\bar{q}_{\tau\tau'}-1\right)^{2}+\frac{1}{2}\beta_{\sigma\tau}^{2}\alpha\left(1-\alpha\right)\left(1-\bar{q}_{\sigma\sigma'}\right)\left(1-\bar{q}_{\tau\tau'}\right),
\end{eqnarray}
where $d\mu(z)$ is a unitary gaussian measure and the order parameters $\bar{q}_{\sigma\sigma'}$ and $\bar{q}_{\tau\tau'}$ are the solutions of the following system of self-consistent coupled equations
\begin{equation}
\begin{cases}\label{dissc} \bar{q}_{\sigma\sigma'}=\int d\mu(z)\tanh^2\left(z\sqrt{\beta_{\sigma}^{2}\bar{q}_{\sigma\sigma'}+\beta_{\sigma\tau}^{2}\left(1-\alpha\right)\bar{q}_{\tau\tau'}}\right),\\
\bar{q}_{\tau\tau'}=\int d\mu(z)\tanh^2\left(z\sqrt{\beta_{\sigma\tau}^{2}\alpha\bar{q}_{\sigma\sigma'}+\beta_{\tau}^{2}\bar{q}_{\tau\tau'}}\right).\\
\end{cases}
\end{equation}
Finally, in the region
\be\label{convcond}
\beta_{\sigma}\beta_{\tau}\geq\beta_{\sigma\tau}^{2}\sqrt{\alpha(1-\alpha)},
\ee
the following sum rule holds
\be
A(\alpha,\boldsymbol{\beta})\leq A^{RS}(\alpha,\boldsymbol{\beta}).
\ee
\end{theorem}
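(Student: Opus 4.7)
The plan is to carry out a Guerra-Toninelli type interpolation for the bipartite spin glass, reducing the problem to the sign analysis of a $2\times 2$ quadratic form in the overlap fluctuations. I introduce two trial parameters $\bar q_{\sigma\sigma'},\bar q_{\tau\tau'}\in[0,1]$ and the interpolating Hamiltonian
\[
-H_N(t)=\sqrt{t}\,\bigl[-H_N(\sigma,\tau;\mathbf{J})\bigr]+\sqrt{1-t}\,\Bigl[\kappa_\sigma\sum_{i=1}^{N_\sigma}J_i\sigma_i+\kappa_\tau\sum_{j=1}^{N_\tau}\tilde J_j\tau_j\Bigr],
\]
where $\{J_i\},\{\tilde J_j\}$ are fresh i.i.d.\ standard Gaussians independent of the original disorder, and
\[
\kappa_\sigma^2=\beta_\sigma^2\bar q_{\sigma\sigma'}+\beta_{\sigma\tau}^2(1-\alpha)\bar q_{\tau\tau'},\qquad \kappa_\tau^2=\beta_{\sigma\tau}^2\alpha\bar q_{\sigma\sigma'}+\beta_\tau^2\bar q_{\tau\tau'}.
\]
Since the interpolating measure factorizes at $t=0$, an immediate computation gives $A_N(0)=\ln 2+\alpha\!\int\! d\mu(z)\ln\cosh(z\kappa_\sigma)+(1-\alpha)\!\int\! d\mu(z)\ln\cosh(z\kappa_\tau)$, which reproduces the two log-cosh integrals of \eqref{eq:SKPress}. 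The fundamental theorem of calculus then writes $A_N=A_N(0)+\int_0^1 A_N'(t)\,dt$ and reduces the task to analyzing $A_N'(t)$.

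Second, I differentiate and apply the Stein-Wick identity to each Gaussian family. The bipartite coupling yields $\tfrac12\beta_{\sigma\tau}^2\alpha(1-\alpha)(1-\langle q_{\sigma\sigma'}q_{\tau\tau'}\rangle_t)$; the two monopartite couplings give $\tfrac14\beta_\sigma^2\alpha(1-\langle q_{\sigma\sigma'}^2\rangle_t)$ and $\tfrac14\beta_\tau^2(1-\alpha)(1-\langle q_{\tau\tau'}^2\rangle_t)$; the one-body Gaussian fields produce $-\tfrac12\kappa_\sigma^2\alpha(1-\langle q_{\sigma\sigma'}\rangle_t)$ and $-\tfrac12\kappa_\tau^2(1-\alpha)(1-\langle q_{\tau\tau'}\rangle_t)$. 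Substituting the choices of $\kappa_\sigma^2,\kappa_\tau^2$ and completing the square with the shifts $x_\sigma:=q_{\sigma\sigma'}-\bar q_{\sigma\sigma'}$, $x_\tau:=q_{\tau\tau'}-\bar q_{\tau\tau'}$, a direct algebraic check shows that the $\bar q$-dependent constant parts of $A_N'(t)$ reassemble into exactly the three quadratic terms $\tfrac14\beta_\sigma^2\alpha(\bar q_{\sigma\sigma'}-1)^2$, $\tfrac14\beta_\tau^2(1-\alpha)(\bar q_{\tau\tau'}-1)^2$, $\tfrac12\beta_{\sigma\tau}^2\alpha(1-\alpha)(1-\bar q_{\sigma\sigma'})(1-\bar q_{\tau\tau'})$ of \eqref{eq:SKPress}, leaving the remainder
\[
R_N=-\int_0^1\!\Bigl\langle\tfrac{\beta_\sigma^2\alpha}{4}x_\sigma^2+\tfrac{\beta_{\sigma\tau}^2\alpha(1-\alpha)}{2}x_\sigma x_\tau+\tfrac{\beta_\tau^2(1-\alpha)}{4}x_\tau^2\Bigr\rangle_t\,dt.
\]

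The main obstacle is the sign control of $R_N$: unlike the ferromagnetic case, the overlaps are not self-averaging in the thermodynamic limit so we cannot hope to push $R_N$ to zero, and we must instead use sign-definiteness of its integrand. Its integrand is the quadratic form associated with the symmetric matrix
\[
M=\tfrac14\begin{pmatrix}\beta_\sigma^2\alpha & \beta_{\sigma\tau}^2\alpha(1-\alpha)\\ \beta_{\sigma\tau}^2\alpha(1-\alpha) & \beta_\tau^2(1-\alpha)\end{pmatrix},
\]
which is positive semidefinite exactly when $\det M\ge 0$, i.e.\ $\beta_\sigma^2\beta_\tau^2\ge\beta_{\sigma\tau}^4\alpha(1-\alpha)$; taking square roots gives precisely \eqref{convcond}. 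In this regime $R_N\le 0$ uniformly in $N$, so the sum rule yields $A_N\le A^{RS}(\alpha,\boldsymbol\beta;\bar q_{\sigma\sigma'},\bar q_{\tau\tau'})+o(1)$ for every admissible trial pair; letting $N\to\infty$ and optimizing over the trial parameters gives the stated bound.

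Finally, to characterize the optimal trial parameters, I extremize the right-hand side of \eqref{eq:SKPress} in $\bar q_{\sigma\sigma'}$ and $\bar q_{\tau\tau'}$: differentiating inside the Gaussian integrals and using the identity $\int z\tanh(\kappa z)\,d\mu(z)=\kappa\!\int [1-\tanh^2(\kappa z)]\,d\mu(z)$, both stationarity equations collapse to the system \eqref{dissc}, which therefore characterizes the extremizers of the RS pressure and closes the proof. The same formula for $A^{RS}$ can also be recovered via the Hamilton-Jacobi and Fourier routes used in the ferromagnetic sections, now with two fictitious spatial coordinates conjugate to the two overlaps; however, the inequality $A\le A^{RS}$ genuinely relies on the sign-definiteness argument above and hence on the convexity condition \eqref{convcond}.
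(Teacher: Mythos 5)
Your proposal reproduces the paper's first proof of Theorem~3 essentially verbatim: the same $\sqrt{t}$/$\sqrt{1-t}$ Guerra-Toninelli interpolation with a one-body Gaussian random field whose variances $\kappa_\sigma^2, \kappa_\tau^2$ are exactly the paper's constants $C_\sigma, C_\tau$, the same $t$-derivative computation via Gaussian integration by parts, the same completion of the square in the centered overlaps, and the same positive-semidefiniteness of the $2\times2$ matrix of coefficients giving the condition $\beta_\sigma\beta_\tau \ge \beta_{\sigma\tau}^2\sqrt{\alpha(1-\alpha)}$. The algebra checks out (the cross terms linear in $\langle x_\sigma\rangle_t, \langle x_\tau\rangle_t$ cancel precisely by the choice of $\kappa_\sigma^2, \kappa_\tau^2$), and your stationarity computation correctly recovers the self-consistent equations; the paper adds two further derivations (Hamilton-Jacobi and Fourier/Cole-Hopf) of the same $A^{RS}$, which your proof does not need but which you correctly note reproduce the formula without yielding the inequality.
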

As it will be clear in the next sections, the replica symmetric approximation can be defined assuming that the fluctuations of the order parameters of the model, i.e. $\left\langle q^2_{\sigma\sigma'}\right\rangle-\left\langle q_{\sigma\sigma'}\right\rangle^2$ and $\left\langle q^2_{\tau\tau'}\right\rangle-\left\langle q_{\tau\tau'}\right\rangle^2$, can be neglected in the thermodynamic limit (hence the order parameters are self-averaging quantities). This assumption was exact in the ferromagnetic model but of course it is no longer true in the low noise region of the phase diagram for the disordered counterpart \cite{MPV}: Indeed the well known phenomenon of replica symmetry breaking, clearly understood for single species \cite{parisi2}\cite{parisi3}\cite{broken}\cite{t4}, occurs also in multi-specie spin-glasses \cite{MultiSpecies}, but a Parisi-like theory in this case is still missing, hence, we will focus only on replica symmetric regimes, which, for practical purposes, are generally the standard level of description \cite{amit,ton}.
\newline
As in the ferromagnetic case, when $\beta_{\sigma\tau}=0$ the sum of two independent spin-glass replica symmetric solutions (namely of the Sherrington-Kirkpatrick (SK) type \cite{MPV}) is obtained and for $\beta_{\sigma}=\beta_{\tau}=0$ the same representation of the free energy as the one shown in \cite{bipCWSK} for a purely bipartite interaction is founded.\\
As a last remark before proving Theorem $\ref{thSK}$, note that the  condition ($\ref{convcond}$), as shown in \cite{MultiSpecies}, plays a very important role for a lot of issues including the proof of the existence of the thermodynamic limit and the convexity of the variational principle regulating the free energy. Finally, exactly as happened in the previous ferromagnetic counterpart, we will  see that, on the critical surface $\beta_{\sigma}\beta_{\tau}=\beta_{\sigma\tau}^{2}\sqrt{\alpha(1-\alpha)}$, a complete description of the model needs just one single order parameter  that is a linear combination of the two overlaps: This phenomenon can be easily understood when $\beta_{\sigma}/\sqrt{\alpha}=\beta_{\tau}/\sqrt{1-\alpha}=\beta_{\sigma\tau}$, i.e. when all the interactions have the same strength: we can't distinguish the two parties and the system reduces just to a single SK spin glass model that can be described completely through a single order parameter that is the global overlap.

%Thus, as in the ferromagnetic case, the gutter condition:
%Allows the system to be expressed as:
%\begin{alignat*}{2}
%A=&\ln2+\\
%&+\alpha\mathbb{E}\ln\left(\cosh\left(\sqrt{\left\{ %\left(\beta_{\sigma}^{2}\right)\bar{q}_{\sigma\sigma'}+\beta_{\sigma\tau}^{2}\left(1-\alpha\right)\bar{q}_{\tau\tau'}\right\} %}\right)\right)+\\
%&+\left(1-\alpha\right)\mathbb{E}\ln\left(\cosh\left(\sqrt{\frac{\left(\beta_{\sigma\tau}^{2}\alpha\right)}{\beta_{\sigma}^{2}}\left\{ %\left(\beta_{\sigma}^{2}\right)\bar{q}_{\sigma\sigma'}+\beta_{\sigma\tau}^{2}\left(1-\alpha\right)\bar{q}_{\tau\tau'}\right\} %}\right)\right)+\\
%&+\frac{1}{4}\frac{\beta_{\tau}\sqrt{1-\alpha}}{\beta_{\sigma\tau}^{2}\left(1-\alpha\right)}\left\{ %\beta_{\sigma}^{2}\left(\bar{q}_{\sigma\sigma'}-1\right)+\beta_{\sigma\tau}^{2}\left(1-\alpha\right)\left(\bar{q}_{\tau\tau'}-1\right)\right\}% ^{2}
%\end{alignat*}

%----------------------------------------------------------------------------------------
%		FIRST APPROACH: FUNDAMENTAL THEOREM OF CALCULUS - D
%----------------------------------------------------------------------------------------
\subsection{First approach: Sum rule}

In this section we give a first proof of Theorem $(\ref{thSK})$ by interpolating  between the original model with nasty two-body couplings and a system regulated by a suitable one-body Hamiltonian whose spins feel an effective random external field representing -at least on average- their microscopic surrounding. This procedure allows to obtain, via the Fundamental Theorem of Calculus, a sum rule for the free energy where overlap fluctuations are embedded in a source term, split from the rest (which, as a consequence, naturally returns the replica symmetric approximation once neglected).
\newline
First of all we define the interpolating Hamiltonian as
\begin{alignat}{2}
H_{N}\left(t\right)=&-\sqrt{t}\left[\frac{\beta_{\sigma\tau}}{\sqrt{N}}\sum_{i=1}^{N_{\sigma}}\sum_{j=1}^{N_{\tau}}
J_{ij}^{\sigma\tau}\sigma_{i}\tau_{j}+\frac{\beta_{\sigma}}{\sqrt{2N_{\sigma}}}\sum_{i,j}^{N_{\sigma}}J_{ij}^{\sigma}\sigma_{i}\sigma_{j}+\frac{\beta_{\tau}}{\sqrt{2N_{\tau}}}
\sum_{i,j}^{N_{\tau}}J_{ij}^{\tau}\tau_{i}\tau_{j}\right]+\\
&-\sqrt{1-t}\left[\sqrt{C_{\sigma}}\sum_{i}\eta_{i}^{\sigma}\sigma_{i}+\sqrt{C_{\tau}}\sum_{\tau}\eta_{i}^{\tau}\tau_{i}\right],
\end{alignat}
where the $\{\eta_{i}^{\sigma}\}_{i=1,..., N_{\sigma}}$ and the $\{\eta_{i}^{\tau}\}_{i=1,..., N_{\tau}}$ are two families of independent unitary gaussian random variables, independent also from the $\mathbf{J}$ and $C_{\sigma}$ and $C_{\tau}$ are two constants that we have to fix appropriately.
Defining naturally the interpolating partition function $Z_N(t)$ and the quenched pressure $A_N(t)$ as
\begin{equation}
Z_N\left(t\right)=\sum_{\left\{ \sigma\right\} }\sum_{\left\{ \tau\right\} }e^{-H_{N}\left(t\right)} \ \ \ \ \ A_N(t)= \frac 1 N\mathbb{E}\ln Z_N(t),
\end{equation}
we recover the original pressure at $t=1$ while, at $t=0$ we have a simpler one body problem that factorizes over the sites and whose pressure can be easily computed and reads as
\begin{eqnarray}\label{disa0}
A_N\left(0\right)&=&\frac{1}{N}\mathbb{E}\ln Z\left(0\right)= \frac{1}{N}\mathbb{E}\ln\sum_{\left\{ \sigma\right\} }\sum_{\left\{ \tau\right\} }e^{-H\left(0\right)}=\nonumber\\
&=&\ln2+\alpha\int d\mu(z)\ln\cosh\left(z\sqrt{C_{\sigma}}\right)+\left(1-\alpha\right)\int d\mu(z)\ln\cosh\left(z\sqrt{C_{\tau}}\right).
\end{eqnarray}
The calculation leading to an explicit expression of $\partial_t A_N(t)$ is long but straightforward and returns
\begin{alignat}{2}
\frac{\partial A_N\left(t\right)}{\partial t}=&-\left[\frac{1}{2}\beta_{\sigma\tau}^{2}\alpha\left(1-\alpha\right)\left\langle q_{\sigma\sigma'}q_{\tau\tau'}\right\rangle +\frac{\beta_{\sigma}^{2}}{4}\alpha\left\langle q_{\sigma\sigma'}^{2}\right\rangle +\frac{\beta_{\tau}^{2}}{4}\left(1-\alpha\right)\left\langle q_{\tau\tau'}^{2}\right\rangle \right]\nonumber \\
&+\left[\frac{C_{\sigma}}{2}\alpha\left\langle q_{\sigma\sigma'}\right\rangle +\frac{C_{\tau}}{2}\left(1-\alpha\right)\left\langle q_{\tau\tau'}\right\rangle \right]\nonumber \\
&+\left[-\frac{C_{\sigma}}{2}\alpha-\frac{C_{\tau}}{2}\left(1-\alpha\right)+\frac{\beta_{\tau}^{2}}{4}\left(1-\alpha\right)+\frac{\beta_{\sigma}^{2}}{4}\alpha+\frac{1}{2}\beta_{\sigma\tau}^{2}\alpha\left(1-\alpha\right)\right].
\end{alignat}
Hence, if we choose
\begin{equation}
\begin{array}{ccc}
C_{\sigma}=\left(\beta_{\sigma}^{2}\right)\bar{q}_{\sigma\sigma'}+\beta_{\sigma\tau}^{2}\left(1-\alpha\right)\bar{q}_{\tau\tau'} & ; & C_{\tau}=\left(\beta_{\sigma\tau}^{2}\alpha\right)\bar{q}_{\sigma\sigma'}+\left(\beta_{\tau}^{2}\right)\bar{q}_{\tau\tau'},
\end{array}
\end{equation}
we can write down the $t$-streaming as
\begin{eqnarray}\label{disdt}
\frac{\partial A_N\left(t\right)}{\partial t}&=&\alpha\frac{\beta_{\sigma}^2}{4}(1-\bar{q}_{\sigma\sigma '})^2+(1-\alpha)\frac{\beta_{\tau}^2}{4}(1-\bar{q}_{\tau\tau '})^2+\alpha(1-\alpha)\frac{\beta_{\sigma\tau}^2}{2}(1-\bar{q}_{\sigma\sigma '})(1-\bar{q}_{\tau\tau'})\nonumber \\
&-&\left[\alpha\frac{\beta_{\sigma}^2}{4}\left\langle (q_{\sigma\sigma'}-\bar{q}_{\sigma\sigma '})^2\right\rangle_t+(1-\alpha)\frac{\beta_{\tau}^2}{4}\left\langle (q_{\tau\tau'}-\bar{q}_{\tau\tau '})^2\right\rangle_t+\alpha(1-\alpha)\frac{\beta_{\sigma\tau}^2}{2}\left\langle (q_{\sigma\sigma'}-\bar{q}_{\sigma\sigma '})(q_{\tau\tau'}-\bar{q}_{\tau\tau'})\right\rangle_t\right]\nonumber.
\end{eqnarray}
Using equation $(\ref{disa0})$ and the last expression for the $t$-streaming of $A(t)$, we can then build the following sum-rule
\begin{equation}\label{dissumrule}
A_N(\alpha,\boldsymbol{\beta})=A_N(1)=A_N(0)+\int_0^1 dt\frac{d}{dt}A_N(t) = A^{RS}(\bar{q}_{\sigma\sigma '},\bar{q}_{\tau\tau '})-\int_0^1 R_N(t),
\end{equation}
where $A^{RS}(\bar{q}_{\sigma\sigma '},\bar{q}_{\tau\tau '})$ is the function stated in Theorem $3$ for a generic couple of parameters $\bar{q}_{\sigma\sigma '}$ and $\bar{q}_{\tau\tau '}$, while the source of overlap fluctuations reads as the rest
\be
 R_N(t)=\alpha\frac{\beta_{\sigma}^2}{4}\left\langle (q_{\sigma\sigma'}-\bar{q}_{\sigma\sigma '})^2\right\rangle_t+(1-\alpha)\frac{\beta_{\tau}^2}{4}\left\langle (q_{\tau\tau'}-\bar{q}_{\tau\tau '})^2\right\rangle_t+\alpha(1-\alpha)\frac{\beta_{\sigma\tau}^2}{2}\left\langle (q_{\sigma\sigma'}-\bar{q}_{\sigma\sigma '})(q_{\tau\tau'}-\bar{q}_{\tau\tau'})\right\rangle_t.
\ee
As soon as $\beta_{\sigma}\beta_{\tau}\geq\beta_{\sigma\tau}^{2}\sqrt{\alpha(1-\alpha)}$, such a source is positively defined and we can minimize the error we commit keeping only the replica-symmetric approximation simply by finding the values of the order parameters that minimize $A^{RS}(\bar{q}_{\sigma\sigma '},\bar{q}_{\tau\tau '})$. By extremizing $A^{RS}(\bar{q}_{\sigma\sigma '},\bar{q}_{\tau\tau '})$ with respect to $\bar{q}_{\sigma\sigma '}$ and $\bar{q}_{\sigma\sigma '}$ we find the conditions $(\ref{dissc})$ that complete the proof of Theorem $\ref{thSK}$. Note that, in the language of the current interpolating method, the equations $(\ref{dissc})$ for the order parameters can be written in the following form
\begin{eqnarray}
\bar{q}_{\sigma\sigma '}&=&\left\langle q_{\sigma\sigma '}\right\rangle_{t=0}\nonumber\\
\bar{q}_{\tau\tau '}&=&\left\langle q_{\tau\tau '}\right\rangle_{t=0}.
\end{eqnarray}
This means that the optimal order parameters represent the mean of the system's overlap when $t=0$. This shows a sort of stochastic stability \cite{pierluigi} in the interpolating procedure and justifies the definition of $A^{RS}(\alpha,\boldsymbol{\beta})$ also in the region $\beta_{\sigma}\beta_{\tau}\leq\beta_{\sigma\tau}^{2}\sqrt{\alpha(1-\alpha)}$, where we don't know the sign of the error term. Finally we want just to point out that, in this interpolating framework, the name "replica symmetric approximation" is justified by the sum rule $(\ref{dissumrule})$, but $A^{RS}(\alpha,\boldsymbol{\beta})$ is the true pressure of the model only if the error term vanishes in the thermodynamic limit, i.e. only in the region of the phase space where the overlaps are self-averaging (high temperature limit \cite{MPV}).

%----------------------------------------------------------------------------------------
%		SECOND APPROACH: MECHANICAL ANALOGY - D
%----------------------------------------------------------------------------------------
\subsection{Second approach: The Hamilton-Jacobi framework}

In this section, as in the ferromagnetic case, we give a proof of Theorem $\ref{thSK}$ using a mechanical analogy with an Hamilton-Jacobi problem for a free particle \footnote{Restricting to single-specie spin glasses, the phenomenon of replica symmetry breaking within the Hamilton-Jacobi framework has been solved and has been reported in \cite{aldo}. For multi-species spin-glasses a quantitative description of such a phenomenon is still lacking. A first trial can be found in \cite{MultiSpecies}.}. First of all we define a  (fictitious) time and space dependent Hamiltonian
\begin{alignat*}{2}
H_{N}\left(t,x\right)=&-\sqrt{t}\left[\frac{1}{\sqrt{N}}\beta_{\sigma\tau}\sum_{i=1}^{N_{\sigma}}\sum_{j=1}^{N_{\tau}}J_{ij}^{\sigma\tau}\sigma_{i}\tau_{j}+\frac{1}{\sqrt{2N_{\sigma}}}\beta_{\sigma}\sum_{i,j}^{N_{\sigma}}J_{ij}^{\sigma}\sigma_{i}\sigma_{j}+\frac{1}{\sqrt{2N_{\tau}}}\beta_{\tau}\sum_{i,j}^{N_{\tau}}J_{ij}^{\tau}\tau_{i}\tau_{j}\right]+\\
&-\sqrt{1-t}\left[\frac{\sqrt{\beta_{\sigma}'}}{\sqrt{2N_{\sigma}}}\sum_{i,j}^{N_{\sigma}}\hat{J}_{ij}^{\sigma}\sigma_{i}\sigma_{j}+\frac{\sqrt{\beta_{\tau}'}}{\sqrt{2N_{\tau}}}\sum_{i,j}^{N_{\tau}}\hat{J}_{ij}^{\tau}\tau_{i}\tau_{j}\right]+\\
&- \left(\sqrt{x\beta_{\sigma\tau}}\right)\left[\sqrt{a}\sum_{i}J_{i}^{\sigma}\sigma_{i}+\sqrt{a^{-1}}\sum_{i}J_{i}^{\tau}\tau_{i}\right],
\end{alignat*}
where
\begin{equation*}
\begin{array}{ccc}
\beta'_{\sigma}=\beta_{\sigma}^{2}-a^2\alpha\beta_{\sigma\tau}^{2} & ; & \beta_{\tau}'=\beta_{\tau}^{2}-a^{-2}\left(1-\alpha\right)\beta_{\sigma\tau}^{2},
\end{array}
\end{equation*}
$a$ is a positive free parameter and the $\{\mathbf{J}\}$ and $\{\mathbf{\hat{J}}\}$ are all families of unitary gaussian random variable independent from each other. Then we define naturally an interpolating pressure as
\be
A_N(t,x)= -\beta f_N(\beta) =\frac 1 N \mathbb{E}\ln Z_N(t,x)=\frac 1 N \mathbb{E}\ln \sum_{\boldsymbol{\sigma,\tau}}e^{-H_N(t,x)},
\ee
where $f_N(\beta)$ is the standard quenched free energy.
\newline
Finally we define an interpolating action $S_{N}\left(t,x\right)$ that, this time, is not directly the interpolating pressure as in the first part of the paper. Here, we need to add two constants that will be determined \textit{a posteriori}. In other words we define
\begin{equation*}
S_{N}\left(t,x\right)=2A_{N}\left(t,x\right)+Xx+Tt.
\end{equation*}
Deriving the action with respect to $t$ we get
\begin{eqnarray}
\frac{\partial S_{N}\left(t,x\right)}{\partial t}&=&\frac{2}{N}\mathbb{E}\left[Z^{-1}_{N\left(t,x\right)}\sum_{\boldsymbol{\sigma,\tau}}\frac{\partial}{\partial t}e^{-H_{N}\left(t,x\right)}\right]+T\nonumber \\
&=&-\frac{1}{2}\left\langle \left[\beta_{\sigma\tau}\left(\alpha aq_{\sigma\sigma'}+\left(1-\alpha\right)a^{-1}q_{\tau\tau'}\right)\right]^{2}\right\rangle_{(t,x)}+\frac{1}{2}\beta_{\sigma\tau}^{2}(\alpha a +(1-\alpha)a^{-1})+T\nonumber\\
&=&-\frac{1}{2}\left\langle \left[\beta_{\sigma\tau}\left(\alpha aq_{\sigma\sigma'}+\left(1-\alpha\right)a^{-1}q_{\tau\tau'}\right)\right]^{2}\right\rangle_{(t,x)},
\end{eqnarray}
where we have chosen $T=-\frac{1}{2}\beta_{\sigma\tau}^{2}(\alpha a +(1-\alpha)a^{-1})$ in order to have a square product in the last expression. For the derivative with respect to the space variable we have
\begin{eqnarray}
\frac{\partial S_{N}\left(t,x\right)}{\partial x}&=&\frac{2}{N}\mathbb{E}\left[\frac{1}{Z_{N}\left(t,x\right)}\sum_{\left\{ \sigma\right\} }\sum_{\left\{ \tau\right\} }\frac{\partial}{\partial x}e^{-H_{N}\left(t,x\right)}\right]+X\nonumber \\
&=&-\left\langle \beta_{\sigma\tau}\left(\alpha a q_{\sigma\sigma'}+\left(1-\alpha\right)a^{-1}q_{\tau\tau'}\right)\right\rangle_{(t,x)} +\beta_{\sigma\tau}\left(a \alpha+a^{-1}\left(1-\alpha\right)\right)+X\nonumber \\
&=&-\left\langle \beta_{\sigma\tau}\left(\alpha a q_{\sigma\sigma'}+\left(1-\alpha\right)a^{-1}q_{\tau\tau'}\right)\right\rangle_{(t,x)}
\end{eqnarray}
with the choice of $X=-\beta_{\sigma\tau}\left(a \alpha+a^{-1}\left(1-\alpha\right)\right)$. If we call, as in the ferromagnetic case, the velocity field
\be
D_N(t,x)=\partial_xS_N(t,x)=-\beta_{\sigma\tau}\left\langle D_N(\boldsymbol{\sigma,\tau};a)\right\rangle_{(t,x)},
\ee
where we defined the observable $D_N(\boldsymbol{\sigma,\tau};a)=\alpha a q_{\sigma\sigma'}(\boldsymbol{\sigma})+\left(1-\alpha\right)a^{-1}q_{\tau\tau'}(\boldsymbol{\tau})$ that is a linear combination of the two overlaps, we can easily write down an Hamilton-Jacobi equation for $S_N(t,x)$ as
\begin{eqnarray}\label{disHJ}
\partial_tS_N(t,x) &+&\frac 1 2 (\partial_xS_N(t,x))^2 + V_N(t,x)=0, \\
V_N(t,x) &=& -\frac 1 2 \beta_{\sigma\tau}^2\left(\left\langle D_N^2\right\rangle_{(t,x)}-\left\langle D_N\right\rangle_{(t,x)}^2\right)=0.
\end{eqnarray}
In contrast with the ferromagnetic case, where the potential evidently vanished in the thermodynamic limit, in the disordered case the potential  $\left\langle D_N^2\right\rangle_{(t,x)}-\left\langle D_N\right\rangle_{(t,x)}^2$, proportional to the fluctuations of the order parameters, is not in general negligible, neither in the thermodynamic limit \cite{MPV}\cite{Gsumrules}.  Still, if we are looking for a replica-symmetric approximation of the real (full-RSB) solution, we can impose $\lim_{N\to \infty}V_N(t,x)=0$ and try to solve a free Hamilton-Jacobi equation for $S(t,x)$. For this purpose we need to compute first the initial condition for the action\footnote{Here the strength of the method becomes clearly manifest as the calculation of the Cauchy condition $S_N(t=0,x=x_0)$ implies considering only one-body interactions (that trivially factorizes) and whose analytic expression is immediate.}, that is
\begin{eqnarray}
S_{N}\left(0,x\right)&=&\frac{2}{N}\mathbb{E}\ln Z_{N}\left(0,x\right)+Xx=\nonumber\\
&=&\frac{2}{N}\mathbb{E}\ln\sum_{\boldsymbol{\sigma}}e^{\sqrt{\frac{\beta '_{\sigma}}{2N_{\sigma}}}\sum_{i,j}^{N_{\sigma}}\hat{J}_{ij}^{\sigma}\sigma_{i}\sigma_{j}+\sqrt{\beta_{\sigma\tau}a x}\sum_{i}J_{i}^{\sigma}\sigma_{i}} +\frac{2}{N}\mathbb{E}\ln\sum_{\boldsymbol{\tau}}e^{\sqrt{\frac{\beta '_{\tau}}{2N_{\tau}}}\sum_{i,j}^{N_{\tau}}\hat{J}_{ij}^{\tau}\tau_{i}\tau_{j}+\sqrt{\beta_{\sigma\tau}a^{-1}x}\sum_{i}J_{i}^{\tau}\tau_{i}}+ Xx \nonumber
\end{eqnarray}
and contains the free energies of two SK models with external random field and different temperatures $\sqrt{\beta '_{\sigma}}$ and $\sqrt{\beta '_{\tau}}$, i.e.
\be
S_N(0,x)=2\alpha A_N^{SK}(\sqrt{\beta '_{\sigma}},\sqrt{\beta_{\sigma\tau}a x})+2(1-\alpha)A_N^{SK}(\sqrt{\beta '_{\tau}},\sqrt{\beta_{\sigma\tau}a^{-1}x})+Xx.
\ee
Since we are interested in the replica symmetric approximation of the solution, we can use it also in the initial condition, replacing
$A^{SK}(\beta)$ with the well known RS approximation \cite{MPV}
\be
A^{SK}_{RS}(\beta)= \log 2 +\int d\mu(z)\log\cosh(z\sqrt{\beta^2 q})+\frac{\beta^2}{4}(1-q)^2
\ee
with $q=q(\beta)$ solution of the self consistent equation
\be\label{scdisord}
q(\beta)=\int d\mu(z)\tanh ^2(z\sqrt{\beta^2 q (\beta)}).
\ee
As in the ferromagnetic case, taking the derivative with respect to $x$ we get the initial condition for the velocity
\begin{equation}
D(0,x)=\partial_x S(0,x)\nonumber
%&=&-2\beta_{\sigma\tau}\left[\alpha a\int d\mu(z)\tanh^2(z\sqrt{\beta '_{\sigma}q(\sqrt{\beta %'_{\sigma}},\sqrt{\beta_{\sigma\tau}ax})+\beta_{\sigma\tau} ax}) + (1-\alpha) a^{-1}\int d\mu(z)\tanh^2(z\sqrt{\beta '_{\tau}q(\sqrt{\beta %'_{\tau}},\sqrt{\beta_{\sigma\tau}a^{-1}x})+\beta_{\sigma\tau}a^{-1}x})\right]\nonumber \\
=-2\beta_{\sigma\tau}\left[\alpha a\int q(\sqrt{\beta '_{\sigma}},\sqrt{\beta_{\sigma\tau}ax}) + (1-\alpha) a^{-1}q(\sqrt{\beta '_{\tau}},\sqrt{\beta_{\sigma\tau}a^{-1}x})\right],
\end{equation}
that allows us to write the following self consistent equation for $D(t,x)$:
\begin{eqnarray}
D(t,x)&=&D(0,x_{0})=D(0,x-D(t,x)t)=\nonumber \\
&=&-2\beta_{\sigma\tau}\left[\alpha a q(\sqrt{\beta '_{\sigma}},\sqrt{\beta_{\sigma\tau}a(x-D t)}) + (1-\alpha) a^{-1}q(\sqrt{\beta '_{\tau}},\sqrt{\beta_{\sigma\tau}a^{-1}(x-Dt)})\right]
\end{eqnarray}
and finally the solution of the model as
\be\label{dissol}
A(t,x)= \frac 1 2 \left(S(t,x)-Xx-Tt\right)=\frac 1 2 \left(S(0,x-D(t,x)t)+\frac 1 2 D(t,x)^2t -Xx-Tt\right).
\ee
At $t=1$ and $x=0$, when we recover the original model, the velocity field  $D(\boldsymbol{\beta})=D(1,0)$ is the solution of
\be
D(\boldsymbol{\beta})=-2\beta_{\sigma\tau}\left[\alpha a q(\sqrt{\beta '_{\sigma}},\sqrt{-\beta_{\sigma\tau}aD }) + (1-\alpha) a^{-1}q(\sqrt{\beta '_{\tau}},\sqrt{-\beta_{\sigma\tau}a^{-1}D})\right].
\ee
If we call
\begin{eqnarray}
\bar{q}_{\sigma\sigma'}(\boldsymbol{\beta};a)&=&q(\sqrt{\beta '_{\sigma}},\sqrt{-\beta_{\sigma\tau}aD }),\nonumber\\
\bar{q}_{\tau\tau'}(\boldsymbol{\beta};a)&=&q(\sqrt{\beta '_{\tau}},\sqrt{-\beta_{\sigma\tau}aD }),
\end{eqnarray}
since $D(\boldsymbol{\beta})=-2\beta_{\sigma\tau}[\alpha a \bar{q}_{\sigma\sigma'}(\boldsymbol{\beta};a)+(1-\alpha) a^{-1}\bar{q}_{\tau\tau'}(\boldsymbol{\beta};a) ]$ and using the definition of $q(\beta)$, we can easily check that $\bar{q}_{\sigma\sigma'}$ and $\bar{q}_{\tau\tau'}$ satisfy the following system of coupled self-consistent equation, independent by the parameter $a$,
\begin{eqnarray}
\bar{q}_{\sigma\sigma'}&=&\int d\mu(z) \tanh^2(z\sqrt{\beta^2_{\sigma}\bar{q}_{\sigma\sigma'}+(1-\alpha) \beta_{\sigma\tau}^2\bar{q}_{\tau\tau'}}),\nonumber\\
\bar{q}_{\tau\tau'}&=&\int d\mu(z) \tanh^2(z\sqrt{\beta^2_{\tau}\bar{q}_{\tau\tau'}+\alpha \beta_{\sigma\tau}^2\bar{q}_{\sigma\sigma'}}),
\end{eqnarray}
that mirrors exactly what we obtained in the previous section (equation $(\ref{dissc})$). Note that, also in this case, due to the freedom in the choice of the interpolating parameter $a$, i.e.
\be
\alpha a \left\langle q_{\sigma\sigma '}(\sigma)\right\rangle+(1-\alpha) a^{-1} \left\langle q_{\tau\tau'}(\tau)\right\rangle=
\alpha a \bar{q}_{\sigma\sigma'}+(1-\alpha) a^{-1} \bar{q}_{\tau\tau'},
\ee
we can associate $\bar{q}_{\sigma\sigma'}=\left\langle q_{\sigma\sigma '}(\sigma)\right\rangle$ and $\bar{q}_{\tau\tau'}=\left\langle q_{\tau\tau '}(\tau)\right\rangle$ and characterize completely the model. Using the previous decomposition for $D(\boldsymbol{\beta})$ inside the equation ($\ref{dissol}$), we get the free energy of the model in terms of overlaps recovering the main expression enclosed in the statements of Theorem $3$.

%----------------------------------------------------------------------------------------
%		THIRD APPROACH: FOURIER FRAMEWORK - D
%----------------------------------------------------------------------------------------
\subsection{Third approach: the Fourier framework}
Once introduced the mechanical interpolating scheme, we can solve the Hamilton-Jacobi equation ($\ref{disHJ}$) using, as in the ferromagnetic counterpart, the Fourier method too. We can do that again in the replica symmetry approximation in which we neglect the potential, proportional to the fluctuations of the order parameters.
\newline
In this context we have to note that solving a free Hamilton Jacobi equation is equivalent to solving a Burger-like equation
\be
\partial_tS_N(t,x)+\frac 1 2 (\partial_xS_N(t,x))^2+\frac{1}{2N}\partial^2_{x^2}S_N(t,x)=0,
\ee
where we added an irrelevant, because vanishing in the thermodynamic limit, mollifier term proportional to the second derivative of $S_N(t,x)$.
\newline
Hence, also for replica-symmetric bipartite spin-glasses, in this way we can apply a Cole Hopf transform, namely introduce a function $\Psi_N(t,x)$ as
\begin{equation}
\Psi_N(t,x) = \exp\left( -N S_N(t,x) \right).
\end{equation}
Trough the latter we can map the problem of solving for the quenched pressure in statistical mechanics in solving a heat equation for the Cole-Hopf transform of the action, namely
\begin{eqnarray}
\frac{\partial \Psi_N(t,x)}{\partial t} - \frac{1}{2N}\frac{\partial^2 \Psi_N(t,x)}{dt^2}=0
\end{eqnarray}
and follow the prescription of Theorem $\ref{le:Fourier}$ to obtain a variational principle equivalent to the equation ($\ref{dissol}$), hence solving the Fourier equation in the impulse space and, due to the monotonicity of the exponential, reverse the expression for the action as
\begin{equation}
S_N(t,x) = -\frac1N \ln \Psi_N(t,x) = -\frac1N \ln \sqrt{\frac{N}{2\pi t}}\int dy \exp\left( N ( S_0(y) + \frac{(x-y)^2}{2t}  )\right),
\end{equation}
that, in the thermodynamic limit, returns the solution as the inverse Legendre transform of the initial condition
\begin{equation}
S(t,x) = \inf_y \left(  \frac{(x-y)^2}{2t} +S_0(y) \right).
\end{equation}

%----------------------------------------------------------------------------------------
%
%		CONCLUSIONS AND OUTLOOKS
%
%----------------------------------------------------------------------------------------
\section{Conclusions and Outlooks}
In this paper we have shown how to adapt techniques originally stemmed mainly in the classical mechanics scenario in order to make them powerful tools for solving the statistical mechanics of mean field spin systems too, focusing on bipartite structures in full interaction. In a sense this work extends, merges (and closes, at least at the replica symmetric level), our investigations started in \cite{bipCWSK} we and \cite{Fourier} on mean field spin systems in interaction. In particular, in this paper we considered the test case of two parties, each one provided of its internal links and in reciprocal interaction with the other party: we investigated both the ferromagnetic case, where parties share the positivity of the couplings (whose strength is instead tunable in each party and reciprocally) and the glassy counterpart, where, retaining the same freedom in the strengths, couplings are drawn at random from a Gaussian distribution allowing for positive and negative strengths, hence frustrating the network.
\newline
At first we proved that it is possible to built a sum rule for the free energy (strictly speaking the pressure) of these models in terms of a replica symmetric expression plus a rest that is exactly the source of order parameter fluctuations, then, if these order parameters are self-averaging (as in the ferromagnetic case or in the replica symmetric approximations), such an expression becomes the true solution in the thermodynamic limit. We stress that, however, for glassy systems, in  a huge region of the tunable parameters (namely where the rest in the sum rule is positive defined) such an expression is further a rigorous bound for the real free energy.
\newline
If self-averaging is lacking, instead, as in the low temperature limit of glassy systems, the expression for the free energy is only an approximation. We remark however that in several applicative fields (e.g. ranging from neural to immune or metabolite networks in theoretical biology) this level of description is retained, hence motivating the present study.
\newline
One step forward, we showed that there exists a sharp one to one mapping between the free energy of these systems in the statistical mechanics scenario and an action function in a suitably defined fictitious spacetime such that solving the latter implies solving the former: following this path, we have shown how to obtain an explicit expression (again at the replica symmetric level) for the action and then map back this finding in the original statistical mechanics framework reobtaining the same solutions (both for ferromagnets and for glasses) previously discovered.
\newline
Lastly, we have shown that the Cole-Hopf transform of the free energy obeys a diffusion-like equation that we solved via the standard route of Green propagator and convolution theorem in the impulse space and then we mapped it back in the original frame, re-obtaining once more the same thermodynamics.
\newline
As a final remark, we stress here that extensions of these techniques to a (finite in number) amount of different species (beyond the test-case of two groups investigated here) is straightforward.
\newline
Summarizing, we believe that, while the self-averaging scenario is completely  understood, from multiple perspectives, and rules out further investigations on ferromagnets with multi-species, the phenomenon of replica symmetry breaking in multiple spin-glasses still deserves much more efforts for being tackled.
\newline
We plan to investigate its structure in the near future.

\section*{Acknowledgements }

This work is supported by the FIRB grant RBFR08EKEV. Further we thank Sapienza Universita' di Roma, Istituto Nazionale di Fisica Nucleare (INFN) and Gruppo Nazionale per la Fisica Matematica (GNFM, INdAM) for partial support.

%----------------------------------------------------------------------------------------
%
%	APPENDICES
%
%----------------------------------------------------------------------------------------

%\newpage

%\addtocontents{toc}{\vspace{2em}} % Add a gap in the Contents, for aesthetics

%\appendix % Cue to tell LaTeX that the following 'chapters' are Appendices
%
%% Include the appendices of the thesis as separate files from the Appendices folder
%% Uncomment the lines as you write the Appendices
%
%\input{./Appendices/usefulformulas}

\addtocontents{toc}{\vspace{2em}} % Add a gap in the Contents, for aesthetics

%\backmatter

%----------------------------------------------------------------------------------------
%
%	BIBLIOGRAPHY
%
%----------------------------------------------------------------------------------------

%\newpage

%\addcontentsline{toc}{chapter}{\bibname}
%\bibliographystyle{abbrv}
%\bibliography{Bibliography}

%\addcontentsline{toc}{chapter}{\bibname}
%\bibliographystyle{abbrv}
%\bibliography{Bibliography}

%\label{Bibliography}
%
%%\lhead{\emph{Bibliography}} % Change the page header to say "Bibliography"
%\bibliographystyle{plain}
%%\bibliographystyle{unsrtnat} % Use the "unsrtnat" BibTeX style for formatting the Bibliography
%%\bibliographystyle{plain_ita}
%\bibliography{Bibliography}% The references (bibliography) information are stored in the file named "Bibliography.bib"

\end{document}